\newmdenv[linecolor=black,linewidth=0.5pt,roundcorner=5pt]{problem}
\newcommand{\graph}{Erd\H{o}s--R\'enyi graph $G(n, p)$}
\newcommand{\jgraph}{Erd\H{o}s--R\'enyi graph}
\newcommand{\tmix}{\tau_{mix}}
\date{}
\keywords{Distributed algorithms, Expander Graphs, Random graphs, Broadcast, Branching random walks, Tree packing, CONGEST model.}
\author{Anton Paramonov}{ETH Zurich}{aparamonov@ethz.ch}{https://orcid.org/0009-0000-0760-8746}{}
\author{Roger Wattenhofer}{ETH Zurich}{wattenhofer@ethz.ch}{https://orcid.org/0000-0002-6339-3134}{}
\authorrunning{Paramonov \& Wattenhofer}
\begin{document}
    \title{Broadcast in Almost Mixing Time}
    \maketitle
    
    \begin{abstract}
        We study the problem of broadcasting multiple messages in the CONGEST model. In this problem, a dedicated source node $s$ possesses a set $M$ of messages with every message of size $O(\log n)$ where $n$ is the total number of nodes. The objective is to ensure that every node in the network learns all messages in $M$. The execution of an algorithm progresses in rounds, and we focus on optimizing the round complexity of broadcasting multiple messages. 
        
        Our primary contribution is a randomized algorithm for networks with expander topology. The algorithm succeeds with high probability and achieves a round complexity that is optimal up to a factor of the network's mixing time and polylogarithmic terms. It leverages a multi-COBRA primitive, which uses multiple branching random walks running in parallel. A crucial aspect of our method is the use of these branching random walks to construct an optimal (up to a polylogarithmic factor) tree packing of a random graph, which is then used for efficient broadcasting.
                    
        We also prove the problem to be NP-hard in a centralized setting and provide insights into why lower bounds that can be matched in expanders, namely graph diameter and $\frac{|M|}{\textit{minCut}}$, cannot be tight in general graphs.
    \end{abstract}

    \newpage
    \pagenumbering{arabic}  %
    \setcounter{page}{1}

    \section{Introduction}
    The CONGEST model \cite{peleg2000distributed} was originally introduced to study computer networks with limited communication channels. Though later, despite its simplicity, it has given rise to a wide range of theoretical challenges. Progress on those has not only allowed for faster distributed algorithms, but also led to a deeper understanding of fundamental graph abstractions (see, e.g., \cite{ghaffari2016distributedPlanar,ghaffari2016distributed,elkin2017distributed, rozhovn2020polylogarithmic,brandt2016lower}).

    The problem we address in this work fits precisely into this tradition. We study the broadcast of multiple messages, where a single source node holds a collection of messages that must be disseminated so that every node in the network receives them as quickly as possible. In real-world systems, this setting naturally models the distribution of data chunks in peer-to-peer file sharing or block chunks dissemination in blockchain protocols. Theoretically, however, the problem reduces to the tree packing in the underlying network graph - a concept still not fully understood.
    
    We show that this problem admits a fast solution in expander networks. In particular, we present an algorithm that broadcasts multiple messages with near-optimal round complexity: the overhead depends only on the network’s mixing time, up to a small polylogarithmic factor. The key technical contribution enabling this result is a nearly-optimal distributed tree packing procedure for a random graph, which serves as the structural backbone of our dissemination strategy and may be of independent interest beyond this application.
    
    We now proceed to define the setting formally.

    \subsection{Model and Problem}
    The CONGEST model is defined as follows. The network is modeled as a graph with $n$ nodes, where execution progresses in synchronous rounds. In each round, a node can send a message of size $O(\log n)$ bits to each of its neighbors. Importantly, nodes do not have prior knowledge of the network topology.

    Although the CONGEST model has been extensively studied over the past two decades, the fundamental problem of broadcasting multiple messages remains unsolved for general topologies.
    
    \begin{definition}[Multi-message broadcast]
    A dedicated source node $s$ possesses a set $M$ of $k$ messages, where each message $m \in M$ has a size of $O(\log n)$ bits. The objective is to ensure that every node in the network learns all messages in $M$.
    \end{definition}

    As pointed out by Ghaffari \cite{ghaffari2015distributed}, the problem suggests an $\Omega(D + k)$ round complexity lower bound, where $D$ is the diameter of the graph. For example, consider a path graph with $s$ as its first node. Any algorithm would require at least $D + k - 1$ rounds to transmit all messages to the last node. However, this bound is only \emph{existential}, meaning there exists a graph for which $\Omega(D + k)$ rounds are needed. In contrast, consider a complete graph with $k = n$. Here, broadcasting can be completed in two rounds: in the first round, $s$ sends the $i$-th message to the node $i$, and in the second round, each node broadcasts the message it received in round $1$. This is significantly better than the $\Omega(k) = \Omega(n)$ bound suggested by the path graph example. These contrasting cases highlight the importance of algorithms that adapt to the underlying topology. Our paper presents such an algorithm, achieving \emph{universal optimality} \cite{garay1998sublinear} on expander graphs. Specifically, it completes the multi-message broadcast on every expander $G$ in a number of rounds within a small overhead of the best possible for $G$. Before stating our results formally, we introduce some necessary terminology.

    \subsection{Preliminaries and Notation}
    Throughout the paper, for a graph $G$, $E(G)$ is the edge set, $V(G)$ is the vertex set, $D(G)$ is the diameter, $d_G(v)$ is the degree of a node $v$ in $G$, $\delta(G)$ is the smallest vertex degree, and $\Delta(G)$ is the largest vertex degree. We do not explicitly specify $G$ if it is obvious from the context, e.g., we can write $\delta$ instead of $\delta(G)$. With high probability (w.h.p.) means with a probability of at least $1 - O(\frac{1}{n^C})$ for some constant $C > 0$, with the probability being taken over both the randomness of the graph (when we assume random graphs) and the random bits of the algorithm. We assume that $\Tilde{O}$ and $\Tilde{\Omega}$ hide $polylog(k, n)$ factors.

    In multiple places in this paper, we are using classical Chernoff bounds. A reader can find formal statements in Appendix \ref{sec:appendix: techincal defintions}.

    One of the key graph-theoretical components in our approach is \emph{tree packing}. A tree packing of a graph $G$ is a collection of spanning subtrees of $G$. The tree packing is characterized by three parameters: (1) its size $S$, i.e., the number of trees, (2) its diameter $H$, i.e., the maximal diameter of a tree, and (3) its weight $W$, i.e., the maximal number of trees sharing a single edge.

    The present work focuses specifically on two graph families, namely {\jgraph}s and expanders. An \emph{\graph} is a graph on $n$ vertices where each edge exists independently from others with probability $p$ \cite{erd6s1960evolution}. Throughout the paper, let $C_p$ denote a sufficiently large constant\footnote{It suffices to take $C_p = 2700$. It was not a concern for the present work to optimize this constant.} such that $G(n, p)$ is connected w.h.p. for $p \geq \frac{C_p\log n}{n}$. The condition $p = \Omega(\frac{\log n}{n})$ is necessary, since for $p \leq \frac{\log n}{n}$, there is a constant probability that the graph is disconnected \cite{erd6s1960evolution}.
    
    We refer to a graph as an \emph{expander} if it has small (polylogarithmic) mixing time\footnote{An alternative way is to say that expander graphs are those that feature an inverse polylogarithmic \emph{conductance}; these two notions are equivalent. See Appendix \ref{sec:appendix: techincal defintions} for a definition of conductance.}. For our purposes, it will be convenient to define the mixing time of an undirected graph by reconsidering it as being bidirected, that is, with each undirected edge $(u, v)$ replaced with $(u, v)$ and $(v, u)$. The \emph{mixing time} \( \tmix \) of a bidirected graph is defined as follows. Consider a lazy random process that starts at an arbitrary edge of the graph. At each step, with probability \( \frac{1}{2} \), the process remains at the current edge, and with probability \( \frac{1}{2} \), it transitions to a uniformly random adjacent edge (a directed edge $e_2$ is adjacent to a directed edge $e_1$ if they are of the form $e_1 = (u, v)$ and $e_2 = (v, w)$). It is known that this process admits a stationary distribution \( \pi \), which is uniform over all edges. Moreover, regardless of the starting edge, the distribution \( D_t \) of the walk after \( t \) steps converges to \( \pi \). We define \( \tmix \) as the smallest \( t \) for which \( D_t \) is inversely polynomial close to \( \pi \). For a formal definition, please refer to Appendix \ref{sec:appendix: techincal defintions}. We are now ready to formally state our results.

    \subsection{Our Contribution}
    Our main result is an algorithm to solve the multi-message broadcast problem with only an overhead of $\tilde{O}(\tmix)$. We highlight that our algorithm can be run on \emph{any} graph, but it will only be efficient on graphs with small $\tmix$, i.e., expanders.
    \begin{theorem}
        There exists a randomized distributed algorithm that for any graph $G$ solves the multi-message broadcast problem in $O(\log^3n \cdot \tmix \cdot \mathrm{OPT})$ rounds with high probability, where $\tmix$ is the mixing time of $G$ and $\mathrm{OPT}$ is the optimal round complexity for the given problem instance.
    \end{theorem}

    In the analysis, we use $k/\delta(G)$ as a natural lower bound for $\mathrm{OPT}$. Indeed, if a node has degree $\delta(G)$, it requires at least $k/\delta(G)$ rounds to receive all $k$ messages. While this bound is meaningful on expander graphs, we show in Section~\ref{sec:bad lbs} that it can be far from tight even on graphs with constant diameter, which highlights the necessity of exploiting expansion properties in our approach.

    We design the algorithm for an arbitrary expander by building on the algorithm for random graphs, which achieves near-optimal performance in an important special case when the network is modeled as an {\graph}.

    \begin{restatable}{theorem}{thmRandomMain}
        \label{thm:random main}
        For an {\graph} with $p \geq \frac{C_p\log n}{n}$, there exists a distributed randomized algorithm that completes the broadcast in $O(\log^2n + \log n \cdot \frac{k}{\delta(G)})$ rounds w.h.p.
    \end{restatable}
    
    To obtain Theorem \ref{thm:random main}, we use the following result of independent interest, which contributes to a line of work \cite{ghaffari2013distributed, chuzhoy2020packing, ghaffari2015distributed, censor2014distributed, gao2014arboricity} on low-diameter tree packing:
    
    \begin{restatable}{theorem}{thmTreePacking}
        \label{thm:building tree packing}
        For an {\graph} with $p \geq \frac{C_p\log n}{n}$, there exists a distributed randomized algorithm that produces a tree packing of size $\delta(G)$, diameter $O(\log n)$, and weight $O(\log n)$ w.h.p.
    \end{restatable}
    
    We construct the latter algorithm by utilizing multiple Coalescing Branching Random Walks \cite{cooper2016coalescing} that run in parallel. To the best of our knowledge, this technique has never been used before in the context of distributed algorithms.
    
    Finally, to map the terrain of the problem, we prove the hardness result in the centralized setting. Namely, we show that computing the exact number of rounds required for multi-message broadcast is NP-hard on general graphs.

    \section{Related Work}
    \paragraph*{Previous Work.} The first work to address universal optimality for the multi-message broadcast problem in the CONGEST model was by Ghaffari \cite{ghaffari2015distributed}. In that paper, the algorithm consists of two phases: (1) constructing a \emph{tree packing}, and (2) performing the broadcast using the constructed tree packing. With a tree packing of diameter $H$, size $S$, and weight $W$, one can complete a multi-message broadcast in $O((H + \frac{k}{S}) \cdot W)$ rounds by splitting messages uniformly across the trees and propagating them sequentially within each tree. However, the limitation of \cite{ghaffari2015distributed} is that constructing the tree packing requires $\tilde{\Omega}(D + k)$ rounds, preventing the approach from achieving universal optimality. Observe that the problem can be solved in $O(D + k)$ rounds by a naive strategy: first construct a BFS tree rooted at $s$, and then downcast messages in it one by one. Thus, the result of \cite{ghaffari2015distributed} yields an improvement only when multiple consecutive instances of the problem are solved, allowing the precomputed tree packing to be reused.

    A subsequent work by Ghaffari et al. \cite{chandra2024fast} considered the tree packing approach on highly connected graphs, i.e., graphs with high edge connectivity $\lambda$. The primary result of this work is an algorithm that runs in $\Tilde{O}(\frac{n + k}{\lambda})$ rounds. This complexity is optimal when $k = \Omega(n)$, as $\frac{k}{\lambda}$ represents an information-theoretic lower bound. However, the algorithm may incur a $\Tilde{\Omega}(n)$ factor overhead in cases where $\lambda$ and $k$ are small compared to $n$.

    Notably, both \cite{ghaffari2015distributed} and \cite{chandra2024fast} consider a slightly more general problem where initially $M$ is not necessarily known to a single node, but different nodes can possess disjoint subsets of $M$. We adhere to our version, where $M$ is initially held by a single node, as it simplifies the presentation. Importantly, when a tree packing is available, the multiple-source version can be reduced to the single-source version without increasing the round complexity (see Remark~\ref{rem:single source}).
    
    \paragraph*{Routing.} One fundamental information-dissemination problem in distributed computing is \emph{routing}, where the goal is to deliver a set of messages from source nodes to their respective destination nodes. Unlike \emph{broadcast}, which involves sending the same message(s) to all nodes, routing requires sending individual message(s) for each source-destination pair.

    Ghaffari et al. \cite{ghaffari2017distributed} approach this problem for expanders by constructing a hierarchy of recursively embedded {\jgraph s}, and achieve routing in $\tau_{\text{mix}} \cdot 2^{O(\sqrt{\log n \log\log n})}$ rounds. Note that $2^{O(\sqrt{\log n \log\log n})}$ dominates $\log^c n$ for any constant $c$. This result was further improved in \cite{ghaffari2018new}, which reduces the round complexity to $O(2^{\sqrt{\log n}})$. Subsequently, Chang and Thatchaphol \cite{chang2020deterministic} presented a deterministic version of expander routing, matching (up to polylogarithmic factors) the round complexity of the randomized algorithm by Ghaffari et al.
    
    For general graphs, Haeupler et al. \cite{haeupler2022hop} provide a routing algorithm that runs in $\mathrm{poly}(D) \cdot n^{o(1)}$ rounds. Their approach leverages \emph{expander decomposition} and \emph{hop-constrained expanders}—subgraphs with small diameter and strong expansion properties. In fact, \cite{haeupler2022hop} obtain a stronger result: given that for every source-sink pair $(s_i, t_i)$, the source $s_i$ is at most $h$ hops from its destination $t_i$, routing can be completed in $O(D + \mathrm{poly}(h)) \cdot n^{o(1)}$ rounds.

    Unlike the above approaches, our paper provides an algorithm with only a polylogarithmic overhead.

    \paragraph*{Network Information Flow.} The network information flow problem \cite{ahlswede2000network} is defined as follows. The network is a directed graph $G(V, E)$ with edge capacities, a source node $s \in V$, and sink nodes $T \subseteq V$. The question is: at what maximal rate can the source send information so that all of the sinks receive that information at the same rate? In the case of a single sink $t$, the answer is given by the $\text{max-flow}(s, t)$. However, when there are multiple sinks $T$, the value $\min\limits_{t \in T} \text{max-flow}(s, t)$ may not be achievable if nodes are only allowed to relay information. In fact, the gap can be as large as a factor of $\Omega(\log n)$ \cite{jaggi2005polynomial}. Nevertheless, if intermediate nodes are allowed to send (linear \cite{li2003linear}) codes of the information they receive, then $\min\limits_{t \in T} \text{max-flow}(s, t)$ becomes achievable \cite{ahlswede2000network}. Notably, in the specific case where $T = V \setminus \{s\}$ (the setting considered in the present paper), the rate of $\min\limits_{t \in T} \text{max-flow}(s, t)$ becomes achievable without coding \cite{wu2004comparison}. The decentralized version of network information flow was studied in \cite{ho2003benefits, fragouli2004decentralized, ho2011universal}. The most relevant work in this direction is  \cite{swamy2013asymptotically} by Swamy et al., where the authors establish an optimal algorithm for the case of random graphs whose radius is almost surely bounded by $3$. Our approach works for general expanders, and in the case of random graphs, allows an expected radius to grow infinitely with $n$ (see \cite{chung2001diameter} for analysis of the diameter of a random graph).
    
    The key difference between the network information flow problem and the multi-message broadcast in CONGEST is that in our problem, the focus is on round complexity, whereas in the information flow problem, the solution is a "static" assignment of messages to edges, and the focus is on throughput.

    \paragraph*{Tree Packing.} The problem of tree packing has been extensively studied, as summarized in the survey by Palmer \cite{palmer2001spanning}. Foundational results in this area include those by Tutte \cite{tutte1961problem} and Nash-Williams \cite{nash1961edge}, who demonstrated that an undirected graph with edge connectivity $\lambda$ contains a tree packing of size $\lfloor \frac{\lambda}{2} \rfloor$. Edmond \cite{edmonds1972} extended this result to directed graphs, showing that such graphs always contain $\lambda$ pairwise edge-disjoint spanning trees rooted at a sender $s \in V$, where $\lambda$ is the minimum number of edges that must be removed to make some node unreachable from $s$. However, these results do not address the diameter of the tree packing.

    Chuzhoy et al. \cite{chuzhoy2020packing} tackled the challenge of finding tree packings with small diameter. They presented a randomized algorithm that, given an undirected graph with edge connectivity $\lambda$ and diameter $D$, outputs with high probability a tree packing of size $\lfloor \frac{\lambda}{2} \rfloor$, weight $2$, and diameter $O((101k \log n)^D)$.

    Tree packing on random graphs was studied by Gao et al. \cite{gao2014arboricity}, who showed that asymptotically almost surely, the size of a spanning tree packing of weight $1$ for a {\graph} is $\min\left\{\delta(G), \frac{|E(G)|}{n - 1}\right\}$, which corresponds to two straightforward upper bounds.

    In the CONGEST model, tree packing was investigated by Censor-Hillel et al. \cite{censor2014distributed}. They proposed an algorithm to decompose an undirected graph with edge connectivity $\lambda$ into fractionally edge-disjoint weighted spanning trees with total weight $\lceil \frac{\lambda - 1}{2}\rceil$ in $\Tilde{O}(D + \sqrt{n\lambda})$ rounds. Furthermore, they proved a lower bound of $\Tilde{\Omega}(D + \sqrt{\frac{n}{\lambda}})$ on the number of rounds required for such a decomposition.

    \paragraph*{Branching Random Walks in Networks.}
    The cover time of a random walk \cite{lawler2010random} on a graph is the time needed for a walk to visit each node at least once. Unfortunately, the expected value of this quantity is $\Omega(n\log n)$ even for a clique, making this primitive less useful in designing fast algorithms. Consequently, several attempts have been made to accelerate the cover time. Alon et al. \cite{alon2008many} proposed initiating multiple random walks from a single source. Subsequent work by Els\"asser and Sauerwald refined their bounds, demonstrating that $r$ random walks can yield a speed-up of $r$ times for many graph classes. Variations of multiple random walks have been applied in the CONGEST model to approximate the mixing time \cite{molla2017distributed}, perform leader election \cite{kutten2015sublinear, gilbert2018leader}, and evaluate network conductance \cite{fichtenberger2018two, batu2024all}.

    A branching random walk \cite{shi2015branching} (BRW) modifies the classical random walk by allowing nodes to emit multiple copies of a walk upon receipt, rather than simply relaying it. This branching behavior potentially leads to exponential growth in the number of walks traversing the graph, significantly reducing the cover time. 
    Gerraoui et al. \cite{guerraoui2023inherent} demonstrated that BRWs can enhance privacy by obscuring the source of gossip within a network. Recently, Aradhya et al. \cite{aradhya_et_al:LIPIcs.OPODIS.2024.36} employed BRWs to address permutation routing problems on subnetworks in the CONGEST model.

    Despite these applications, to the best of our knowledge, the branching random walk remains underexplored in distributed computing, and this work seeks to showcase its untapped potential.

    \section{Algorithm Overview}
    In this section, we provide a high-level overview of our algorithm, which consists of two major parts. First, we embed a virtual {\graph} atop the physical network $H$, and then we solve the problem on $G$. The reason we do this embedding is to transform an arbitrary expander into an almost-regular one. We explain the embedding procedure in Section \ref{sec:embedding random graph}, and from that point on, we focus solely on solving multi-message broadcast on an {\jgraph}. In Section \ref{sec:cobra intro}, we overview COBRA - the main building block of our algorithm for random graphs, and in Section \ref{sec:alg description} we describe the algorithm itself.

    \subsection{Embedding a Random Graph}
    \label{sec:embedding random graph}
    
   In this section, we describe how to embed a random graph atop a given expander. A related technique was used by Ghaffari et al.~\cite{ghaffari2017distributed}, where it suffices to embed a sparse {\jgraph} with expected degree $O(\log n)$. In our setting, however, it is crucial to preserve the \emph{minimum degree} of the host graph. To this end, we reuse Lemma~\ref{lem:ghaffari parallel random walks} from~\cite{ghaffari2017distributed}, but we also introduce additional ideas of node groups and rejection sampling to meet our stronger requirements.

    Embedding a graph \( G \) atop a host graph \( H \) involves creating virtual nodes $V(G)$ and establishing edges $E(G)$ between them so that
    \begin{itemize}
        \item Every virtual node $u \in V(G)$ is simulated by some physical node $\mathrm{host}(u) \in V(H)$.
        \item If a virtual node $u \in V(G)$ sends a message to $v \in V(G)$ along the edge $(u, v) \in E(G)$, this should be simulated by $\mathrm{host}(u)$ sending the same message to $\mathrm{host}(v)$ via some path in $H$. 
    \end{itemize}
    
    Our construction will guarantee that each round of communication in $G$ can be simulated in $O(\tmix(H) \cdot \log n)$ rounds in $H$ and that $\delta(G)$ will be close to $\delta(H)$. This gives us the following ``lifting'': if there is an algorithm that solves a problem in $f(k, \log (|V(G)|), \delta(G))$ rounds on $G$ for some function $f$, then there is an algorithm that solves a problem in essentially $f(k, \log n, \delta(H)) \cdot \tmix(H)\cdot \log n$ rounds on $H$. Formally, 

    \begin{theorem}
        \label{thm:embedding translation}
        Assume there exists an algorithm that solves multi-message broadcast on an {\jgraph} $G$ in $O(\frac{k}{\delta(G)}\cdot \log (|V(G)|) + \log^2(|V(G)|))$ rounds w.h.p. Then there exists an algorithm that solves multi-message broadcast on any graph $H$ on $n$ vertices in
        \begin{align*}
            O(\frac{k}{\delta(H)}\cdot \tmix(H) \cdot \log^2 n + \tmix(H)\cdot \log^3n)
        \end{align*} rounds w.h.p.
    \end{theorem}

    This round complexity consists of three terms: $O((\frac{k}{\delta(H)}\cdot \log n + \log^2n)\cdot \tmix \cdot \log n)$ for simulating an algorithm, $O(\tmix(H) \cdot \log^2 n)$ rounds for constructing an embedding of an {\jgraph} knowing $\tmix(H)$, and $O(\tmix(H) \cdot \log^2n)$ for estimating the $\tmix(H)$. Below, we describe the embedding and estimation parts.
    
    \paragraph*{Embedding Construction.} We now outline how to construct the embedding of an {\jgraph} atop the expander \( H\) preserving the minimal degree, assuming nodes have an estimate of \( \tmix \).
    
    We need to define a set of nodes of $G$. As an auxiliary concept, we start by defining \emph{sub-nodes}. First, rethink $H$ as being bidirectional, that is, replace every undirected edge $(u, v) \in E(H)$ with two directed edges $(u, v)$ and $(v, u)$. Now, we associate a \emph{sub-node} with each directed edge, and we say that a node $u \in V(H)$ simulates all the sub-nodes of its outgoing edges. Next, for each $u\in V(H)$, we arbitrarily group sub-nodes $u$ is simulating into \emph{groups} of size exactly $\delta(H)$. This might leave some residual sub-nodes that will not be assigned to any group; we call those \emph{inactive}, and others are called \emph{active}. Note that at most half of all sub-nodes can be inactive. Finally, we define nodes in $G$ to be the aforementioned groups, and we make a node $u \in V(H)$ simulate a node $v \in V(G)$ if it simulates the sub-nodes in the respective group.

    To establish edges in $G$, we launch lazy random walks of length $\tmix(H)$ from each active sub-node. The following lemma guarantees that those can be executed in parallel in only $O(\tmix(H)\cdot \log n)$ rounds w.h.p. 
    
    \begin{lemma}[Lemma 2.5~\cite{ghaffari2017distributed}]
    \label{lem:ghaffari parallel random walks}
    Let \( H = (V, E) \) be an \( n \)-node graph. Suppose we wish to perform \( T = n^{O(1)} \) steps of a collection of independent lazy random walks in parallel. If each node \( v \in V \) initiates at most \( d_H(v) \) walks, then w.h.p., the \( T \) steps of all walks can be performed in \( O(T \cdot\log n) \) rounds in a distributed setting.
    \end{lemma}

    If a random walk that started from a sub-node $a$, terminates at an active sub-node $b$, we call such a walk \emph{successful} and we propagate the ``success'' message back to $a$ by simply executing the walk in reverse. This establishes the edge between $\mathrm{group}(a) \in V(G)$ and $\mathrm{group}(b) \in V(G)$. If, on the other hand, the walk from $a$ terminates at an inactive node, we call such a walk \emph{failed} and we propagate the ``failure'' message back to $a$. Sub-nodes that received a ``failure'' message retry the same process again. The following lemma shows that only a few retries are needed.  

    \begin{lemma}
        After at most $O(\log n)$ retries, all sub-nodes will execute a successful random walk w.h.p. 
    \end{lemma}
    \begin{proof}
        By the definition of a mixing time, a random walk from a given sub-node is distributed (almost) uniformly among all the sub-nodes. Hence, given that at least half of all sub-nodes are active, a probability of success in one round is at least $1/2$ (minus a negligible term due to the ``almost'' uniformity). Therefore, the probability of not succeeding once after $100\log n$ trials is no more than $\frac{1}{n^{10}}$. Applying the union bound over all sub-nodes completes the proof.  
    \end{proof}

    Once paths are established, communication along \( (u, v) \in E(G) \) is simulated by routing a message along the corresponding walk path in \( H \), which, by Lemma~\ref{lem:ghaffari parallel random walks}, can be done in \( O(\tmix(H) \cdot \log n) \) rounds.
    
    We remark that the process described above is equivalent to sampling with replacement $\delta(H)$ neighbors for each node $v \in V(G)$ independently (almost) uniformly at random.  While this is not a canonical definition of an Erd\H{o}s--R\'enyi graph, the resulting distribution is equivalent, modulo a negligible inversely polynomial small probability.

    \paragraph*{On Distributed Estimation of Mixing Time.} Note that in order to implement this simulation, nodes do not need prior knowledge of $\tmix$. Instead, they can estimate the mixing time of the network using the decentralized algorithm of Kempe and McSherry~\cite{kempe2004decentralized}, which runs in \( O(\tmix \cdot \log^2n) \) rounds (without prior knowledge of \( \tmix \)). Their algorithm can be used to estimate the second principal eigenvalue \( \lambda \) of the transition matrix, which relates to the mixing time via the following inequality:

    \begin{theorem}[\cite{rivera2020lecture}]
    Given a graph on $n$ nodes with mixing time $\tmix$ and second principal eigenvalue $\lambda$, it holds that
    \[
    \left( \frac{1}{1 - \lambda} - 1 \right) \log n \leq \tmix \leq 2 \log n \cdot \frac{1}{1 - \lambda}.
    \]
    \end{theorem}

    Combining all together, we can prove Theorem \ref{thm:embedding translation}.

    \begin{proof}[Proof of Theorem \ref{thm:embedding translation}]
        First, estimate the mixing time $\tmix(H)$ of the graph $H$ using the approach from \cite{kempe2004decentralized}. This takes $O(\tmix(H)\cdot \log^2 n)$ rounds.

        Let the nodes of $H$ discover $\delta(H)$. This can be done through building a BFS tree in $H$ within $O(D(H)) = O(\tmix(H))$ rounds. 
        
        Knowing $\tmix(H)$ and $\delta(H)$, compute an embedding of an {\jgraph} $G(n', \frac{\delta(H)}{n'})$ as described above, where $n'$ is the number of sub-node groups in $H$. Make the source node for a multi-message broadcast in $G$ any node in $G$ that is simulated by the source of an original problem instance in $H$. Solve the problem in $G$ in $O(\frac{k}{\delta(G)}\cdot \log n + \log^2n) = O(\frac{k}{\delta(H)}\cdot \log n + \log^2n)$ rounds, simulating each round of $G$ in $O(\tmix(H)\cdot \log n)$ rounds in $H$. 
        
    \end{proof}

    \subsection{Coalescing-Branching Random Walk}
    \label{sec:cobra intro}
    We now overview the main primitive for our algorithm for an {\jgraph} -- the COalescing-BRAnching Random Walk (COBRA walk). COBRA walk was first introduced by Dutta et al. \cite{dutta2015coalescing} in their work ``Coalescing-Branching Random Walks on Graph'' \cite{dutta2015coalescing}, with subsequent refinements presented in \cite{cooper2016coalescing, mitzenmacher2018better, berenbrink2018tight}. The COBRA walk is a generalization of the classical random walk, defined as follows: At round $0$, a source node $s \in V$ possesses a token. At round $r$, each node possessing a token selects $\kappa$ of its neighbors uniformly at random, sends a token copy to each of them, and these neighbors are said to possess a token at round $r + 1$. Here, $\kappa$, referred to as the \emph{branching factor}, can be generalized to any positive real number \cite{cooper2016coalescing}. When $\kappa = 1$, the COBRA walk reduces to the classical random walk. From now on, we consider $\kappa$ to always be $2$. It is important to note that if a node receives multiple token copies in a round, it behaves as if it has received only one token; it will still choose $\kappa$ neighbors uniformly at random. This property, where received token copies coalesce at a node, gives the primitive its name.

    Cooper et al. \cite{cooper2016coalescing} studied the cover time of the COBRA walk on regular expanders and obtained the following theorem, which we use in our result

    \begin{theorem}[Cooper et al. \cite{cooper2016coalescing}]
        \label{thm:cooper cobra main}
        Let $G$ be a connected $n$-vertex regular graph. Let $\lambda_2$ be the second largest eigenvalue (in the absolute value) of the normalized adjacency matrix of $G$. Then after
        $
            O\left(\frac{\log n}{(1 - \lambda_2)^3}\right)
        $
        steps the COBRA walk covers $G$ with probability $1 - O(\frac{1}{n^2})$.
    \end{theorem}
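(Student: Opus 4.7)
The plan is to track the informed set $S_t$ (the set of nodes currently holding a token at round $t$) and split the analysis into an initial \emph{growth phase} while $|S_t| \le n/2$ and a \emph{saturation phase} thereafter. The main structural tool throughout is the expander mixing lemma: since $G$ is $d$-regular with second eigenvalue $\lambda_2$, for any $A,B \subseteq V$,
\[
\left|\, e(A,B) - \frac{d|A|\,|B|}{n}\,\right| \;\le\; \lambda_2\, d\, \sqrt{|A|\,|B|}.
\]
This bound converts the spectral gap $1-\lambda_2$ into concrete edge counts between informed and uninformed sides, which is exactly what the branching step consumes.

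First I would analyze the growth phase. Each $u \in S_t$ chooses $\kappa = 2$ neighbors uniformly at random, so an uninformed node $v$ with $j$ informed neighbors is reached with probability $1-(1-2/d)^j \ge 1 - e^{-2j/d}$. The expander mixing lemma lower-bounds the number of $(S_t, \bar S_t)$-edges by $\Omega((1-\lambda_2)\, d\, |S_t|)$ once $|S_t|$ exceeds a mild threshold, and this should translate, via a Chernoff-style argument that accounts for the negative correlation induced by coalescence, into an expected growth $\mathbb{E}[|S_{t+1}|] \ge (1 + \Omega((1-\lambda_2)^2))\,|S_t|$. Iterating for $O(\log n / (1-\lambda_2)^2)$ rounds brings $|S_t|$ above $n/2$ with probability $1 - O(n^{-3})$.

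For the saturation phase I would run a symmetric argument on the uninformed side: as long as $|\bar S_t| \ge \mathrm{polylog}(n)$, the same edge count forces $|\bar S_{t+1}| \le (1-\Omega((1-\lambda_2)^2))\,|\bar S_t|$ with high probability. Once $|\bar S_t|$ is small, I would close out by a direct union bound: every remaining uninformed $v$ has $\Omega((1-\lambda_2)\, d)$ informed neighbors by expansion, so $v$ is hit in a single round with probability $1 - e^{-\Omega(1)}$, and $O(\log n)$ further rounds cover all residual nodes. Taking a union bound over the polylogarithmically many events completes the high-probability statement.

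The main obstacle is recovering the stated $(1-\lambda_2)^{3}$ rather than the $(1-\lambda_2)^{2}$ that the naive two-phase argument above produces. The cubic factor suggests that during some intermediate sub-interval — most plausibly near $|S_t|\approx n/2$, where the additive error $\lambda_2 d\sqrt{|S_t||\bar S_t|}$ in the mixing lemma is comparable to the main term — the multiplicative growth per round degrades to only $1+\Omega(1-\lambda_2)$. Pushing through this regime will require a finer concentration argument that tracks the joint distribution of coalescing tokens, for instance a Doob martingale on token destinations combined with a sharpened spectral bound on $\mathbb{E}[|S_{t+1}|\mid S_t]$. I expect this transition analysis, rather than either end of the process, to carry the bulk of the technical work.
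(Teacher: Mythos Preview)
The paper does not prove this theorem at all: it is quoted verbatim as a result of Cooper et al.\ \cite{cooper2016coalescing} and used as a black box, with the only additional comment being the remark that the failure probability can be boosted from $1 - O(1/n)$ to $1 - O(1/n^{C})$ by adjusting constants in their Chernoff-based analysis. There is therefore no proof in the present paper to compare your proposal against.

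For what it is worth, your two-phase outline (growth via the expander mixing lemma, then shrinkage of the uninformed set, then a union-bound endgame) is broadly the shape of the argument in \cite{cooper2016coalescing}, and your candid acknowledgment that the naive version yields only a $(1-\lambda_2)^{-2}$ dependence is well placed: recovering the cubic exponent is exactly where the technical work in that paper lies, and your speculation about the transition regime is not quite how it arises there. But since the present paper simply imports the bound, any reconstruction of the cubic dependence would go well beyond what this paper does or needs.
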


    We point out that in \cite{cooper2016coalescing}, the bound on probability is $1 - O(\frac{1}{n})$, though the analysis, which is based on Chernoff bounds, can be adapted so that the probability is $1 - O(n^C)$ for any constant $C$ and the cover time is only multiplied by a constant.

    In the upcoming analysis of our result, we will make sure that $\lambda_2$ is no more than $\frac{13}{14}$ w.h.p. Let $C_T$ be a sufficiently large constant so that a COBRA walk covers a regular graph with $\lambda_2 \leq \frac{13}{14}$ with probability at least $1 - O(\frac{1}{n^2})$ in $C_T \cdot \log n$ rounds. From now on, we define $T$ to be $C_T\log n$.

    \subsection{Random Graph Algorithm Description}
    \label{sec:alg description}
    The algorithm to solve the multi-message broadcast on an {\graph} proceeds through the following steps:
    
    \begin{enumerate}
        \item \textbf{Building a BFS Tree and Gathering Information:}
        Nodes construct a BFS tree rooted at the source node $s$. Using the tree, every node learns the total number of nodes $|V|$, the minimum degree $\delta$, and the maximum degree $\Delta$. This step takes $O(D)$ rounds and does not require any prior knowledge of the graph topology.
    
        \item \textbf{Regularizing the Graph:}
        Each node $v$ adds $\Delta - \textit{deg}(v)$ self-loops to its adjacency list to make the graph regular. In our analysis, we will show that each node adds a relatively small number of self-loops. This operation is purely local and requires no communication between nodes.
    
        \item \textbf{Constructing Spanning Subgraphs through Multiple COBRA Walks:}  
        The source node $s$ initiates $\delta$ COBRA walks by creating $\delta$ tokens labeled $1$ through $\delta$. When a token from the $i$-th COBRA walk is sent along an edge $e:\ (u, v)$, nodes $u$ and $v$ mark $e$ as part of the $i$-th subgraph. Note that a single edge may belong to multiple subgraphs.
    
        When running multiple COBRA walks simultaneously, congestion can occur if a node attempts to send multiple tokens from different COBRA walks along the same edge in a single round. Since only one token can traverse an edge per round, this creates a bottleneck that needs to be managed. To address this, we organize the process into \emph{phases}, where each phase consists of $2$ rounds. In each phase, spanning rounds $\{2r, 2r+1\}$, every node $u$ distributes two tokens for each COBRA walk whose token(s) it received during the previous phase. Since there are $\delta$ COBRA walks, node $u$ could have received at most $p \leq \delta$ distinct tokens. Let these tokens be denoted by $t_1, \ldots, t_p$. Node $u$ then distributes each token twice, as follows: it enumerates its neighbors as $v_1, \ldots, v_\ell$, with $\ell \geq \delta$, generates two independent random permutations $\sigma_1, \sigma_2 \in S_\ell$, and sends token $t_i$ to neighbor $v_{\sigma_1(i)}$ in round $2r$ and to neighbor $v_{\sigma_2(i)}$ in round $2r+1$. Consequently, for any fixed COBRA walk $i$ and any node $u$, the token belonging to the $i$-th COBRA walk is sent to a neighbor of $u$ chosen uniformly at random, independently of the other tokens in that same COBRA walk. (Different cobra walks, however, need not be independent.) 
    
        This step completes in $T$ phases.

        \item[] \textbf{Parallel execution.} For the rest of the algorithm, we run protocols on all the constructed subgraphs in parallel. To achieve this despite potential congestion (recall that each edge may belong to multiple subgraphs), we again organize the execution into phases. Now, each phase spans $2T$ rounds, ensuring that messages sent along any shared edge are distributed across the protocols without conflict. Specifically, if a protocol would send a message along an edge in a particular round when executed independently, all such messages from different subgraphs are scheduled within the same phase. This phased execution allows all protocols to proceed in parallel while respecting the edge capacity.
    
        \item \textbf{Constructing Tree Packings:}  
        The source $s$ initiates a BFS on each subgraph to transform it into a tree. By the end of this step, the algorithm constructs a tree packing $\{T_i\}_{i \in [\delta]}$. This step takes the number of phases that is at most the maximal eccentricity of $s$ among all the subgraphs, that is at most $O(T)$ phases, and hence $2T \cdot O(T) = O(T^2)$ rounds.
    
        \item \textbf{Distributing Messages: }
        The source node $s$ evenly divides the set of messages $M$ across the $\delta$ trees, ensuring that each tree receives $\frac{k}{\delta}$ messages. These messages are then downcasted along the trees one by one. To broadcast $\frac{k}{\delta}$ messages in a single tree of diameter $O(T)$ one needs $O(\frac{k}{\delta} + T)$ rounds. Hence, doing it in parallel in all trees takes $O(T \cdot (\frac{k}{\delta} + T))$ rounds.
    \end{enumerate}

    \section{Proof}
    In this section, we formally state our results and their auxiliaries for the {\jgraph}. We start by providing a high-level overview.
    \subsection{Proof Outline}
    The proof proceeds in three main steps. First, we argue that making a random graph regular by adding self-loops does not significantly affect its expansion properties. To this end, we rely on a result of Hoffman et al.~\cite{hoffman2021spectral}, which shows that a random graph is a good expander with high probability. We also use standard Chernoff bound arguments to establish that random graphs are nearly regular, meaning the ratio between maximum and minimum degrees is close to one. Finally, we invoke a classical consequence of Weyl's inequality, which ensures that small perturbations to the diagonal of a matrix only slightly affect its eigenvalues. Together, these ingredients imply that regularizing the graph preserves its spectral expansion up to a small error. We discuss the details at the end of this section, namely in subsection \ref{sec:regularity + expansion}.
    
    Assuming the expansion properties remain, the second step is to prove that each individual COBRA walk successfully covers the entire network. The permutation trick allows us to claim that, though COBRA walks are not independent, their marginal distributions stay as if they were. Using this, Theorem \ref{thm:cooper cobra main} by Cooper et al., together with a union bound, guarantees that all COBRA walks cover the whole graph within $O(\log n)$ phases w.h.p.
    
    In the final step, we argue that the algorithm produces a tree packing with size $\delta(G)$, diameter $O(\log n)$, and weight $O(\log n)$. This tree packing allows for broadcasting all messages in $O(\log^2n + \log n \cdot \frac{k}{\delta(G)})$ rounds. This is optimal up to additive $\log^2n$ and multiplicative $\log n$, as $\frac{k}{\delta(G)}$ provides a natural lower bound for the optimal broadcast time: if there is a node with degree $\delta$, it needs at least $k/\delta$ rounds to receive $k$ messages.

    We now give the detailed proof starting from step 2, assuming step 1.

    \subsection{Success of Multiple COBRAs}
    \label{sec:cobra analysis}
    In this section, we demonstrate that multiple COBRA walks cover the graph fast, provided that it retains its expansion properties after adding self-loops. Formally, we assume the following lemma, which we prove in section \ref{sec:regularity + expansion}.

    \begin{restatable}{lemma}{lemGoodGraph}
        \label{lem:get good graph}
        With probability at least $1 - O(\frac{1}{n^2})$, for $p \geq \frac{C_p\log n}{n - 1}$, an {\graph} can be transformed into $G'$ by adding weighted self-loops to the nodes so that (1) $G'$ is regular, (2) $1 - \lambda_2(G') \geq \frac{1}{14}$.
    \end{restatable}
    
    Now, to see why multiple COBRAs do not congest, let us recall the permutation trick used in the algorithm. To be able to run multiple COBRA walks in parallel, we partition the execution into \emph{phases}, each lasting $2$ rounds. In each phase, spanning rounds ${2r, 2r+1}$, every node $u$ distributes two tokens for each COBRA walk whose token(s) it received during the previous phase. Since there are $\delta$ COBRA walks, node $u$ could have received at most $p \leq \delta$ distinct tokens. Let these tokens be denoted by $t_1, \ldots, t_p$. Node $u$ then distributes each token twice, as follows: it enumerates its neighbors as $v_1, \ldots, v_\ell$, with $\ell \geq \delta$, generates two independent random permutations $\sigma_1, \sigma_2 \in S_\ell$, and sends token $t_i$ to neighbor $v_{\sigma_1(i)}$ in round $2r$ and to neighbor $v_{\sigma_2(i)}$ in round $2r+1$.

    The claim below summarizes the properties resulting from the procedure described:
    \begin{claim}
        \label{clm:cobras don't interfere}
        For each COBRA walk, a phase corresponds exactly to a round in the execution where this COBRA walk runs in isolation. In particular, for a given COBRA walk, every token is sent independently of other tokens of this walk and to a uniformly chosen neighbor.
    \end{claim}

    To conclude the analysis of multi-COBRA's performance on a random graph, we combine Lemma \ref{lem:get good graph} and Claim \ref{clm:cobras don't interfere} and get the following lemma. 
    \begin{lemma}
        \label{lem:cobra compile}
        If the initial network graph is an {\graph} with $p \geq \frac{C_p\log n}{n}$, all COBRA walks cover the graph in $O(T)$ rounds with probability at least $1 - O(\frac{1}{n})$. 
    \end{lemma}
    \begin{proof}
        By Lemma \ref{lem:get good graph} we know that $G'$ - the graph we obtain from $G$ after adding self-loops - has $1 - \lambda_2(G') \geq \frac{1}{14}$ with probability at least $1 - O(\frac{1}{n^2})$.  
        Therefore, according to Theorem \ref{thm:cooper cobra main}, a COBRA walk succeeds to cover $G'$ in $O\left(\frac{\log n}{(1 - \lambda_2(G'))^3}\right) = O(T) = O(\log n)$ rounds with probability at least $1 - O(\frac{1}{n^2})$. And Claim \ref{clm:cobras don't interfere} tells us that while the COBRA walks are not independent, the marginal distribution of each individual process matches the distribution it would have under independent execution. Hence, we can apply a union bound and conclude that $\delta(G) \leq n$ COBRA walks cover $G'$ in $O(\log n)$ phases with probability at least $1 - O(\frac{1}{n})$.
    \end{proof}

    We remark that the analysis in \cite{cooper2016coalescing} is done for simple regular graphs (i.e., regular graphs not featuring self-loops). However, the arguments apply verbatim if self-loops are allowed, with symbols reinterpreted to mean the number of outgoing edges of a node instead of the number of neighbors.

    \subsection{Tree Packing and Broadcast}

    In this section, we show how to obtain a tree packing from multi-COBRA's edge assignment and describe how we use this tree packing to broadcast the messages. 

    The following two lemmas speak about the properties of the spanning graphs obtained via multi-COBRA.
    \begin{lemma}
        \label{lem:low edge weight}
        After multi-COBRA completes all $T$ phases, every edge of the graph belongs to at most $O(\log n)$ subgraphs.
    \end{lemma}
    \begin{proof}
        At each phase, each edge $(u, v)$ is assigned to at most $4$ subgraphs (two via tokens from $u$ to $v$, and two from $v$ to $u$), and there are $T = O(\log n)$ phases.
    \end{proof}

    \begin{lemma}
        \label{lem:low diameter}
        After multi-COBRA completes all $T$ phases, each subgraph has a diameter of $O(\log n)$.
    \end{lemma}
    \begin{proof}
        The multi-COBRA runs for $T = O(\log n)$ phases, and in each phase, we add to each subgraph only those nodes that are neighbors of the nodes already included. Consequently, the diameter of the subgraph increases by at most two per phase.
    \end{proof}

    In the rest of this section, we will analyze protocols that run on all the subgraphs in parallel. To achieve this parallelism despite potential congestion (recall that each edge may belong to multiple subgraphs), the execution is again organized into phases. Now, each phase spans $O(\log n)$ rounds, ensuring that messages sent along any shared edge are distributed across the protocols without conflict. Specifically, messages that would be sent on an edge in the same round by independently executed protocols are all scheduled within a single phase across subgraphs. This phased execution allows all protocols to proceed in parallel while respecting the edge capacity. As a result, the combined round complexity of the protocols increases by at most a factor of $O(\log n)$ compared to running an individual protocol.

    We are now ready to prove Theorem \ref{thm:building tree packing}.
    \thmTreePacking*
    \begin{proof}[Proof of Theorem \ref{thm:building tree packing}]
        The algorithm goes as follows. First, let nodes share the information of $n$, $\delta$ and $\Delta$. This can be done in $O(D)$ rounds by constructing a BFS tree. Next, every node $v$ adds $\Delta - \textit{deg}(v)$ self-loops. Then, parties run multi-COBRA for $T$ rounds that by Lemma \ref{lem:cobra compile} result with probability at least $1 - O(\frac{1}{n})$ in $\delta$ spanning subgraphs $\{S_i\}_{i \in [\delta]}$. By Lemma \ref{lem:low diameter}, those have diameter $O(\log n)$. Moreover, by Lemma \ref{lem:low edge weight}, every edge of the graph belongs to at most $O(\log n)$ subgraphs. 

        Now, we launch BFSs on all subgraphs in parallel to turn them into spanning trees. As discussed earlier in this section, this can be done in $O(\log n \cdot \max\limits_{i \in [\delta]}D(S_i)) = O(\log^2n)$ rounds. As a result of doing so, the weight of every edge could only have decreased, and the diameter of each subgraph at most doubled. 
    \end{proof}

    Having a tree packing with the properties described, we can prove Theorem \ref{thm:random main} by adding a final piece.

    \thmRandomMain*
    \begin{proof}[Proof of Theorem \ref{thm:random main}]
        First, build a tree packing from Theorem \ref{thm:building tree packing}. Then, $s$ evenly distributes messages among the obtained trees, so that each tree receives $\frac{k}{\delta}$ messages. After that, in each tree, nodes downcast corresponding messages. The algorithm for downcasting messages $\{m_1, \ldots, m_k\}$ from the root of a single tree works as follows. In the first round, the root sends the first message (\(m_1\)) to all its immediate children. In the second round, the children forward \(m_1\) to their respective children (the root's grandchildren), while the root simultaneously sends the second message (\(m_2\)) to its immediate children. This process continues iteratively: in each subsequent round, the root sends the next message (\(m_i\)) to its children, and all other nodes forward the message they received in the previous round to their respective children. This way, for $k'$ messages and a tree of diameter $H$ it takes $H + k' - 1$ rounds for every node to discover every message. 

        Multiplying by a congestion factor of $O(\log n)$, we get that the round complexity of broadcasting $k$ messages in $\delta(G)$ spanning trees of diameter $O(\log n)$ is 
        \begin{align*}
            O\left(\log^2n + \log n \cdot \frac{k}{\delta(G)}\right).
        \end{align*}
    \end{proof}

    \begin{remark}
        \label{rem:single source}
        In \cite{ghaffari2015distributed, chandra2024fast}, authors consider a problem where initially messages are spread over the network, that is every node possesses a subset of $M$. This seems like a more general version, however, when using a tree packing approach, these two problems are equivalent. The intuition is, nodes can first agree on the distribution of messages among trees, then upcast the messages to the root in their corresponding trees, and finally perform a downcast as described in our paper, all that in $\Tilde{O}(D(G) + \frac{k}{\delta(G)})$. For the full proof, a reader is invited to see the proof of Theorem 1 in \cite{chandra2024fast}.
    \end{remark}

    \subsection{Introducing Regularity while Maintaining Expansion}
    \label{sec:regularity + expansion}
    In this Section, we prove Lemma \ref{lem:get good graph} that states that graph's expansion properties are preserved after adding self-loops. We start by providing relevant concepts from spectral theory. 
    \begin{definition}
        Let $A$ be an $n \times n$ matrix with entries from $\mathbb{R}_{\geq 0}$ and let $D$ be a diagonal matrix such that $D_{ii} = \sum\limits_{j \in [n]}A_{ij}$. Assuming $D_{ii} > 0$ for all $i \in [n]$, let $\overline{A}$ denote a normalized version of $A$, i.e. $\overline{A} = D^{-1/2}AD^{-1/2}$.
    \end{definition}
    \begin{definition}
        Let $A$ be an $n\times n$ matrix. Define $\lambda_2(A)$ to be the second largest (in absolute value) eigenvalue of $A$. Let $G$ be an undirected multi-graph and $A$ be its weighted adjacency matrix. Define $\lambda_2(G)$ as $\lambda_2(\overline{A})$. 
    \end{definition}

    \begin{theorem}[Hoffman et al. \cite{hoffman2021spectral}]
        \label{thm:hoffman lambda2}
        For a positive constant $C$ and $p \geq \frac{C\log n}{n}$, consider an \graph. Then, with probability at least $1 - O(\frac{1}{n^{C - 1}})$, we have $\lambda_2(G) = O(\frac{1}{\sqrt{pn}})$.
    \end{theorem}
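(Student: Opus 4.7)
The plan is to prove the spectral bound by decomposing the adjacency matrix into its expectation plus a mean-zero random perturbation, controlling the perturbation via matrix concentration, and then converting the bound on the unnormalized spectrum into one on the normalized spectrum using degree concentration.

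First I would write $A = \mathbb{E}[A] + Z$, where $\mathbb{E}[A] = p(J - I)$ for $J$ the all-ones matrix and $Z$ a symmetric matrix with zero-mean, bounded, independent (above the diagonal) entries of variance $p(1-p)$. The spectrum of $\mathbb{E}[A]$ is explicit: a top eigenvalue of $p(n-1)$ with eigenvector $\mathbf{1}/\sqrt{n}$ and all remaining eigenvalues equal to $-p$. Therefore, once $\|Z\|$ is controlled, Weyl's inequality immediately gives $|\lambda_i(A) - \lambda_i(\mathbb{E}[A])| \leq \|Z\|$ for every $i$, so in particular $|\lambda_2(A)| = O(\|Z\| + p)$.

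Next I would bound $\|Z\|$ by matrix concentration for sparse random matrices. For $p \geq C\log n/n$, standard results in the style of F\"uredi--Koml\'os, or the more modern non-asymptotic bounds of Bandeira--Van Handel / Vu, give $\|Z\| = O(\sqrt{np})$ with probability at least $1 - O(n^{-(C-1)})$ once the constant $C$ is large enough. The key quantitative ingredient is that the largest entrywise standard deviation is $\sqrt{p(1-p)}$ and the maximum row sum of variances is $np(1-p)$, and both quantities combine to give a spectral norm of order $\sqrt{np}$ with the required tail probability via the trace method or matrix Bernstein. This step is the technical heart of the argument and the main obstacle, since naive moment-method arguments require a careful counting of closed walks to achieve the tail bound $n^{-(C-1)}$ uniformly in $C$.

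Finally I would pass from $A$ to the normalized matrix $\overline{A} = D^{-1/2} A D^{-1/2}$. By a Chernoff bound each degree $d_v$ satisfies $d_v = np(1 + o(1))$ with probability at least $1 - O(n^{-(C-1)})$, so $D^{-1/2} = (np)^{-1/2}(I + E)$ with $\|E\| = o(1)$. Writing
\begin{equation*}
\overline{A} = \frac{1}{np}(I+E)\, A\, (I+E),
\end{equation*}
I would use submultiplicativity of the spectral norm together with the already-established bound $|\lambda_2(A)| = O(\sqrt{np})$, restricted to the orthogonal complement of the Perron eigenvector of $\overline{A}$, to conclude $|\lambda_2(\overline{A})| = O(1/\sqrt{np})$. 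The mild subtlety here is that the Perron eigenvector of $\overline{A}$ is $D^{1/2}\mathbf{1}/\|D^{1/2}\mathbf{1}\|$, not $\mathbf{1}/\sqrt{n}$, but degree concentration shows these differ by $o(1)$ in norm, which is absorbed into the final constant.
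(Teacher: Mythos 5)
The paper does not actually prove this statement: it is imported as a black box from Hoffman, Kahle and Paquette, so there is no internal proof to compare against. Your first two steps are the standard route and are acceptable at the citation level: the decomposition $A=p(J-I)+Z$, the explicit spectrum of $p(J-I)$, and the bound $\|Z\|=O(\sqrt{np})$ with polynomially small failure probability (Feige--Ofek, Vu, Bandeira--van Handel) do give $|\lambda_2(A)|=O(\sqrt{np})$ for the \emph{unnormalized} adjacency matrix.

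The gap is in the passage to the normalized matrix, and it is exactly the regime $p=\Theta(\frac{\log n}{n})$ that the theorem must cover. There, the claim that every degree satisfies $d_v=np(1+o(1))$ with probability $1-O(n^{-(C-1)})$ is false: Chernoff plus a union bound only yields $|d_v-np|\le\varepsilon np$ with $\varepsilon=\Theta(1/\sqrt{C})$, a constant. Hence $\|E\|=o(1)$ fails, and if you run the last step with only this $\ell_\infty$ degree bound and submultiplicativity, the rank-one part $pJ$ of $\mathbb{E}[A]$ leaves a constant-size residue: for a unit vector $x$ orthogonal to the Perron vector $D^{1/2}\mathbf{1}$, the term $p\bigl(\mathbf{1}^\top D^{-1/2}x\bigr)^2$ is controlled only by $p\sum_v d_v^{-1}\bigl(1-\tfrac{d_v}{np}\bigr)^2\le\frac{\varepsilon^2}{1-\varepsilon}$, so you obtain $\lambda_2(\overline{A})=O(1/\sqrt{np})+O(1/C)$, not $O(1/\sqrt{np})$. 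What rescues the argument is an $\ell_2$ bound on the degree fluctuations, $\sum_v(d_v-np)^2=O(n^2p)$ w.h.p., which turns that residue into $O(1/(np))$ and also shows the Perron vectors of $\overline{A}$ and of $\mathbb{E}[A]$ are within $O(1/\sqrt{np})$ of each other; this lemma is true but needs its own truncation/dyadic argument and is the genuinely missing ingredient. This difficulty is precisely why earlier normalized-spectrum results (Oliveira, Chung--Radcliffe) only reach $O(\sqrt{\log n/(np)})$ -- a constant at this density -- and why Hoffman et al.\ run a Kahn--Szemer\'edi-type argument directly on the normalized matrix rather than reducing to the unnormalized one.
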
 

    The following Lemma shows that with high probability, an {\graph} is almost regular. 

    \begin{lemma}
        \label{lem:deg dif}
        Let $p \geq \frac{C_p \log n}{n - 1}$. Then for an \graph, $\frac{\Delta(G)}{\delta(G)} \leq 1 + \frac{1}{7}$ with probability at least $1 - O(\frac{1}{n^2})$.
    \end{lemma}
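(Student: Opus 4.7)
The plan is a standard Chernoff plus union bound argument. The key observation is that for any fixed vertex $v$, its degree $\deg(v)$ is a sum of $n-1$ independent Bernoulli$(p)$ random variables, so $\mathbb{E}[\deg(v)] = (n-1)p$. I would apply a multiplicative Chernoff bound with a deviation parameter $\varepsilon$ chosen small enough that simultaneously controlling $\Delta$ from above and $\delta$ from below yields the ratio $\frac{1+\varepsilon}{1-\varepsilon} \leq \frac{8}{7}$. Solving $7(1+\varepsilon) \leq 8(1-\varepsilon)$ gives $\varepsilon \leq \tfrac{1}{15}$, so I would fix $\varepsilon = \tfrac{1}{15}$.

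Next, by the two-sided Chernoff bound,
\begin{equation*}
\Pr\bigl[\,|\deg(v) - (n-1)p| \geq \varepsilon (n-1)p \,\bigr] \;\leq\; 2\exp\!\left(-\tfrac{\varepsilon^2 (n-1)p}{3}\right).
\end{equation*}
Plugging in $\varepsilon = \tfrac{1}{15}$ and using the hypothesis $(n-1)p \geq C_p \log n$, the right-hand side is at most $2 n^{-C_p/675}$. Choosing $C_p$ large enough (any $C_p \geq 2025$ works, and the footnote's $C_p = 2700$ is comfortably sufficient) makes this at most $O(n^{-3})$. A union bound over all $n$ vertices then shows that, with probability at least $1 - O(n^{-2})$, every vertex simultaneously satisfies $(1-\varepsilon)(n-1)p \leq \deg(v) \leq (1+\varepsilon)(n-1)p$.

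On this event, $\delta(G) \geq (1-\varepsilon)(n-1)p$ and $\Delta(G) \leq (1+\varepsilon)(n-1)p$, so
\begin{equation*}
\frac{\Delta(G)}{\delta(G)} \;\leq\; \frac{1+\varepsilon}{1-\varepsilon} \;=\; \frac{16/15}{14/15} \;=\; \frac{8}{7} \;=\; 1 + \frac{1}{7},
\end{equation*}
which is exactly the claimed bound.

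I do not expect any serious obstacle here: everything reduces to calibrating the Chernoff deviation $\varepsilon$ against the target ratio $\tfrac{8}{7}$ and checking that the constant $C_p$ in the paper's hypothesis is large enough to absorb both the $\tfrac{1}{\varepsilon^2}$ factor from Chernoff and the extra factor of $3$ needed so that the union bound over $n$ vertices leaves room for an $O(n^{-2})$ failure probability. The only mild care required is to be explicit that the same high-probability event controls $\Delta$ and $\delta$ simultaneously, which follows automatically from the union bound being over both tails at every vertex.
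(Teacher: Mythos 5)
Your proposal is correct and follows essentially the same route as the paper: a two-sided multiplicative Chernoff bound on each $\deg(v)$ with deviation $\varepsilon = \tfrac{1}{15}$, a union bound over all $n$ vertices, and the ratio computation $\frac{1+1/15}{1-1/15} = \frac{8}{7}$. The paper even uses the same exponent $\exp(-C_p \log n/675)$ and per-vertex failure probability $O(n^{-3})$, so there is nothing to add.
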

    \begin{proof}
        Let us fix a vertex $v$ and consider the number of its incident edges. For each potential edge $e_i$, $i \in [n - 1]$ let us introduce an indicator variable $\chi_i$ which is equal to $1$ if the edge exists and to $0$ if it does not. The number of edges $v$ has is then $\sum\limits_{i \in [n - 1]}\chi_i$. The expectation of that is $p(n - 1)$, and applying Chernoff bounds, we get the following bounds on the degree of $v$
        \begin{align*}
            Pr\left[\textit{deg}(v) \geq (1 + \frac{1}{15})C_p\log n\right] \leq \exp\left(-\frac{C_p\log n}{675}\right) \leq \frac{1}{2n^3},
        \end{align*}
        and 
        \begin{align*}
            Pr\left[\textit{deg}(v) \leq (1 - \frac{1}{15})C_p\log n\right] \leq \exp\left(-\frac{C_p\log n}{675}\right) \leq \frac{1}{2n^3}.
        \end{align*}
        Now, taking union bound over all vertices, we conclude that for every vertex $v$ it holds that $(1 - \frac{1}{15})C_p\log n \leq \textit{deg}(v) \leq (1 + \frac{1}{15})C_p\log n$ with probability at least $1 - \frac{1}{n^2}$. Thus, with probability $1 - \frac{1}{n^2}$, we have that $\frac{\Delta(G)}{\delta(G)} \leq \frac{1 + \frac{1}{15}}{1 - \frac{1}{15}} = 1 + \frac{1}{7}$
    \end{proof}

    The next ingredient is to show that the slight perturbation of the diagonal elements of the matrix induces only a little change in its eigenvalues. 
    
    \begin{lemma}
        \label{lem:eigenvalue shift}
        Let $A$ be an $n \times n$ adjacency matrix of a connected graph and let $D$ be a diagonal matrix such that $D_{ii} = \sum\limits_{j \in [n]}A_{ij}$. Let $E$ be a $n \times n$ diagonal matrix such that $0 \leq E_{ii} \leq \varepsilon$ for some $0 < \varepsilon < 1$ and all $i \in [n]$.

        Then $\lambda_2(\overline{A + DE}) \leq \lambda_2(\overline{A}) + 6\varepsilon$.
    \end{lemma}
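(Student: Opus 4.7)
The plan is first to rewrite $\overline{A+DE}$ explicitly as an additive perturbation of $\overline{A}$, and then to invoke Weyl's inequality for Hermitian matrices. Set $F = I+E$, which is diagonal with entries in $[1,1+\varepsilon]$; since the row sums of $A+DE$ equal $D_{ii}(1+E_{ii})$, the new degree matrix is $D' = DF$. Because $D$, $E$, $F$ are all diagonal and hence pairwise commute, a direct calculation gives
\begin{align*}
\overline{A+DE} \;=\; (DF)^{-1/2}(A+DE)(DF)^{-1/2} \;=\; F^{-1/2}\overline{A}\,F^{-1/2} + EF^{-1}.
\end{align*}

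Next I would bound the operator-norm distance $\|\overline{A+DE}-\overline{A}\|$. For the first summand, I would write
\begin{align*}
F^{-1/2}\overline{A}\,F^{-1/2} - \overline{A} \;=\; F^{-1/2}\overline{A}\bigl(F^{-1/2}-I\bigr) + \bigl(F^{-1/2}-I\bigr)\overline{A},
\end{align*}
and use $\|\overline{A}\| = 1$, $\|F^{-1/2}\|\leq 1$, together with the elementary inequality $1-(1+x)^{-1/2}\leq x/2$ for $x\in[0,1]$ to obtain $\|F^{-1/2}-I\|\leq \varepsilon/2$. The second summand $EF^{-1}$ is diagonal with entries $E_{ii}/(1+E_{ii})\leq\varepsilon$, so its spectral norm is at most $\varepsilon$. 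Combining, $\|\overline{A+DE}-\overline{A}\|$ is bounded by a small constant multiple of $\varepsilon$, comfortably within the $6\varepsilon$ slack that the lemma allows.

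Finally I would apply Weyl's inequality: for the ordered real spectra $\mu_1\geq\mu_2\geq\cdots\geq\mu_n$ of the two symmetric matrices, $|\mu_i(\overline{A+DE}) - \mu_i(\overline{A})|$ is at most the operator-norm bound from the previous step. Connectedness pins the Perron eigenvalue $\mu_1$ at $+1$ for both matrices, with eigenvectors $D^{1/2}\mathbf{1}$ and $(D')^{1/2}\mathbf{1}$ respectively, so the top eigenvalue does not move at all. For every $i\geq 2$, the definition of $\lambda_2$ gives $|\mu_i(\overline{A})|\leq \lambda_2(\overline{A})$, hence $|\mu_i(\overline{A+DE})|\leq \lambda_2(\overline{A}) + 6\varepsilon$, which yields $\lambda_2(\overline{A+DE})\leq \lambda_2(\overline{A})+6\varepsilon$.

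The one subtlety to watch is that $\lambda_2$ is defined in absolute value rather than in the usual decreasing order, so I would make sure to control both the positive and the negative ends of the spectrum simultaneously. Fortunately, the Perron eigenvalue sitting at exactly $+1$ both before and after the perturbation makes the pairing between ordered eigenvalues clean, so Weyl's inequality transfers directly without any reshuffling argument; this is really the only place where any care is needed.
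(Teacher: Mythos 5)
Your proof is correct and takes essentially the same route as the paper's: observe that the new degree matrix is $D(I+E)$, express $\overline{A+DE}$ as $\overline{A}$ plus perturbation terms of spectral norm $O(\varepsilon)$, and apply Weyl's inequality. Your algebra is somewhat cleaner (a two-term decomposition giving a sharper $2\varepsilon$ bound, versus the paper's seven-term expansion), and you are in fact more careful than the paper in justifying why Weyl's ordered-eigenvalue inequality transfers to $\lambda_2$ defined via absolute values, by noting that the Perron eigenvalue stays pinned at $1$ for both matrices.
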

    \begin{proof}[Proof sketch.]
        The idea of the proof is to express $\overline{A + DE}$ as a sum of $\overline{A}$ and matrices with the small spectral norms, and then apply a corollary of Weyl's Theorem \cite{WeylsInequalityWiki}, that is, for $n\times n$ matrices $M_1$ and $M_2$, it holds that
        \begin{align*}
            |\lambda_2(M_1 + M_2) - \lambda_2(M_1)| \leq ||M_2||_2
        \end{align*}
        The full proof can be found in Appendix \ref{sec:appendix:proofs}.
    \end{proof}

    Finally, using Lemmas \ref{lem:deg dif} and \ref{lem:eigenvalue shift} alongside Theorem \ref{thm:hoffman lambda2}, we show that w.h.p., regularizing an {\graph} by adding self-loops for every node to reach $\Delta(G)$ does not ruin its expansion properties.
    
    \lemGoodGraph*
    \begin{proof}
        Let $A$ be the adjacency matrix of $G$, $D$ be the degree matrix of $G$ and $E$ be the $n \times n$ diagonal matrix with entries $E_{ii} = \frac{\Delta(G)}{\textit{deg}(v_i)} - 1$.  By the Lemma \ref{lem:deg dif}, with probability $1 - O(\frac{1}{n^2})$, $G$ has $\frac{\Delta}{\delta} \leq 1 + \frac{1}{7}$ and therefore, $E_{ii} \leq \frac{1}{7}$ for all $i \in [n]$ with probability $1 - O(\frac{1}{n^2})$.
    
        Now, to each vertex $v$ in $G$, add $\Delta - \textit{deg}(v)$ self-loops to obtain a graph $G'$. Clearly, $G'$ is $\Delta(G)$-regular. The adjacency matrix of $G'$ will then be $A + DE$ and hence, applying Lemma \ref{lem:eigenvalue shift} we deduce that $\lambda_2(G') \leq \lambda_2(G) + \frac{6}{7}$.
        
        By the Theorem \ref{thm:hoffman lambda2}, we know that with probability $1 - O(\frac{1}{n^2})$, $\lambda_2(G) \leq \frac{C}{\sqrt{p(n - 1)}}$ for some constant $C$, which for large enough $n$ is less than $\frac{1}{14}$, thus $\lambda_2(G') \leq \frac{13}{14}$.
    \end{proof}

\section{Sketching the terrain}
In this section, we share insights on the multi-message broadcast problem and why it is difficult to solve optimally. First, in Section \ref{sec:np-hardness}, we sketch the proof of its NP-hardness in the centralized setting. Then, in Section \ref{sec:bad lbs}, we argue that one can not hope to design an algorithm for a general graph that completes in $\tilde{O}(D(G) + \frac{k}{\delta(G)})$ rounds, which implies a need for some fundamentally new techniques.

\subsection{NP-Hardness}
\label{sec:np-hardness}
We prove that determining the optimal number of rounds for the multi-message broadcast problem in CONGEST is NP-hard in the centralized setting. To do that, we reduce the Set splitting problem to the multi-message broadcast.
\begin{definition}[Set splitting problem]
     Given a family $F$ of subsets of a finite set $S$, decide whether there exists a partition of $S$ into two subsets $S_1$, $S_2$ such that all elements of $F$ are split by this partition, i.e., none of the elements of $F$ is completely in $S_1$ or $S_2$.
\end{definition}

In our reduction, for simplicity of presentation, we allow edges to have arbitrary bandwidth instead of $\Tilde{O}(1)$, since, as we show, this can be simulated in CONGEST. In the reduction, the initial set $S$ corresponds to the set $M$ of messages, and $s$ has two dedicated children to which it can send $n_1$ and $n_2$ messages, respectively with $n_1 + n_2 = k$, simulating the splitting. Deciding how to split messages between these two children is the only ``smart'' choice an algorithm should make; all other nodes are just forwarding messages they receive. For the full version of the proof, please see Appendix \ref{sec:appendix:np-hardness}.

\subsection{Straightforward Lower Bounds are not Enough}
\label{sec:bad lbs}

\begin{figure}[ht]
\centering
\begin{minipage}[t]{0.68\linewidth}
  \vspace{0pt}
    It is tempting to argue for an approximation factor of an algorithm by comparing its round complexity to two straightforward lower bounds: $D(G)$ and $\frac{k}{\textit{minCut}(G)}$. 
    Unfortunately, those are not sufficient as there is an instance (see Figure \ref{fig:no simple lower bounds}) where $D(G) = O(1)$ and $\frac{k}{minCut(G)} = O(1)$, but the optimal answer is $\Omega(\sqrt{k})$. As for NP-hardness, we consider a more general model where edges might have arbitrary bandwidth, but we show that this can be simulated in CONGEST. For details, please see Appendix \ref{sec:appendix:Straightforward Lower Bounds are not Enough}.
\end{minipage}\hfill
\begin{minipage}[t]{0.28\linewidth}
  \vspace{0pt}
  \centering
  \includegraphics[width=\linewidth]{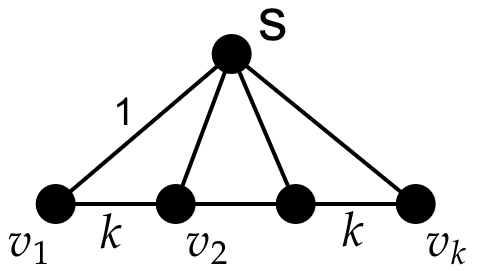}
  \caption{An example where diameter and minimum cut are not telling.
  Edge labels denote bandwidth.}
  \label{fig:no simple lower bounds}
\end{minipage}
\end{figure}

\section{Conclusion}
In this work, we explored the multi-message broadcast problem within the CONGEST model. We presented an algorithm that achieves universal optimality, up to polylogarithmic factors, for networks modeled as random graphs and extended our results through a lifting technique to general expander graphs, paying an additional factor of mixing time.

Our study introduces several promising avenues for future investigation. One intriguing direction involves developing distributed processes capable of rapidly covering graphs that also maintain composability, that is, efficiently supporting multiple simultaneous executions under congestion constraints. Another important direction is identifying graph properties beyond traditional metrics such as diameter and minimum cut, which would facilitate establishing tight lower bounds. Subsequently, algorithms matching these lower bounds could be devised, closing the open problem of multi-message broadcast.

\clearpage
\newpage
\bibliographystyle{plainurl}
\bibliography{refs}
\newpage

\appendix

\section{Technical definitions}
\label{sec:appendix: techincal defintions}
The mixing time of a Markov chain with transition matrix \( P \) and stationary distribution \( \pi \) can be defined as:
    
\begin{definition}[Mixing time]
\[
\tmix= \min \left\{ t : \max_x \left\| \frac{P^t(x, \cdot)}{\pi} - 1 \right\|_{2,\pi} \leq n^{-100} \right\},
\]
where
\[
\left\| \frac{P^t_x}{\pi} - 1 \right\|_{2,\pi} = \sqrt{ \sum_y \left( \frac{P^t(x,y)}{\pi(y)} - 1 \right)^2 \pi(y) }.
\]
\end{definition}

In the context of random walks on graphs, the state space corresponds to the edges of the graph. At each step, the walk moves — equiprobably — to one of the edges adjacent to the current edge. One can show that the resulting Markov chain admits a uniform stationary distribution.

\begin{remark}
To ensure that a Markov chain has a stationary distribution, it must be aperiodic. This is commonly enforced by making the walk \emph{lazy}, i.e., staying in the same state with probability \( 1/2 \).
\end{remark}

\begin{definition}[Conductance]
Let \(G = (V, E)\) be an undirected graph, and let \(\deg(v)\) denote the degree of vertex \(v\).  For any nonempty subset \(S \subset V\), define
\[
\partial S \;=\; \{\,\{u,v\}\in E : u\in S,\;v\notin S\},
\qquad
\mathrm{vol}(S) \;=\; \sum_{u\in S} \deg(u).
\]
The \emph{conductance} of the cut \((S, S^c)\) is
\[
\Phi(S) \;=\;
\frac{\lvert \partial S \rvert}{\min\bigl\{\mathrm{vol}(S),\,\mathrm{vol}(S^c)\bigr\}}.
\]
The \emph{conductance} of the graph \(G\) is then
\[
\Phi(G) \;=\;
\min_{\substack{S \subset V \\ 0 < \mathrm{vol}(S)\le \frac12\mathrm{vol}(V)}}
\Phi(S).
\]
\end{definition}

\begin{definition}[Chernoff bounds]
    Let \( X = \sum_{i=1}^n X_i \) be the sum of independent Bernoulli random variables with \( \mathbb{E}[X] = \mu \). Then, the following Chernoff bounds hold:

\begin{align*}
\Pr\left(X \leq (1 - \delta)\mu\right) &\leq e^{-\delta^2 \mu / 2},\ 0 \leq \delta\\
\Pr\left(X \geq (1 + \delta)\mu\right) &\leq e^{-\delta^2 \mu / (2 + \delta)},\ 0 \leq \delta\\
\Pr\left(|X - \mu| \geq \delta \mu\right) &\leq 2e^{-\delta^2 \mu / 3},\  0 \leq \delta \leq 1.
\end{align*}
\end{definition}

\section{Straightforward Lower Bounds are not Enough}
\label{sec:appendix:Straightforward Lower Bounds are not Enough}

\begin{figure}
    \centering
    \includegraphics[width=0.48\textwidth]{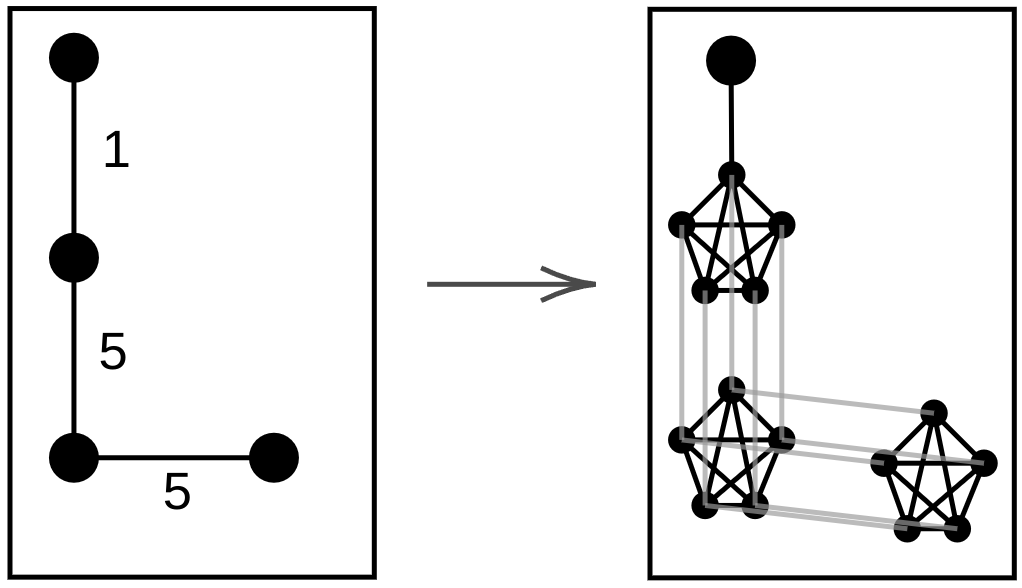}
    \caption{An example of mapping a graph with arbitrary bandwidths to a graph suitable for CONGEST.}
    \label{fig:simulating big bandwidth}
\end{figure}
In this section, it will be more comfortable for us to consider a more general model than CONGEST, namely the model where edges have arbitrary bandwidth. To transform a graph with arbitrary bandwidths to a graph with all bandwidths equal to $1$, we do the following. The source $s$ corresponds to a single node in the new graph. For a node $v \neq s$ in the original graph, let $B$ denote the maximal bandwidth of its adjacent edges. In the new graph, node $v$ then corresponds to a clique of $B$ nodes. We call this clique a $v$-clique. If in the original graph nodes $v \neq s$ and $u \neq s$ were connected by an edge of bandwidth $b$, we pick (arbitrary) $b$ nodes in $v$-clique, $b$ nodes in $u$-clique, and draw $b$ edges between picked nodes to establish a perfect matching. For every edge $(s, u)$ of bandwidth $b$, we connect the new source with $b$ arbitrary nodes of the $u$-clique. We call the resulting graph \emph{the corresponding CONGEST graph}.

\begin{claim}
    \label{clm:small diameter change}
    Consider the original graph $G$ and its corresponding CONGEST graph $G'$. Then\\
    $D(G) \leq D(G') \leq 2D(G) + 1$.
\end{claim}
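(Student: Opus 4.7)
The plan is to prove both inequalities by exhibiting a natural projection map from $V(G')$ back to $V(G)$ for the lower bound, and by constructing explicit paths in $G'$ from shortest paths in $G$ for the upper bound.

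For the lower bound $D(G) \le D(G')$, the key tool is the projection $\pi : V(G') \to V(G)$ defined by $\pi(x) = v$ whenever $x$ lies in the $v$-clique, and $\pi(s) = s$. By construction of $G'$, every edge $(x, y) \in E(G')$ satisfies one of: (i) $\pi(x) = \pi(y)$ (an intra-clique edge), or (ii) $(\pi(x), \pi(y)) \in E(G)$ (a matching edge between cliques, or an $s$--to-clique edge). Consequently, any path of length $k$ in $G'$ from $x_0$ to $x_k$ projects under $\pi$ to a walk in $G$ of length at most $k$ from $\pi(x_0)$ to $\pi(x_k)$, which contains a path of length at most $k$ between the same endpoints. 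Picking $u, v \in V(G)$ realizing the diameter and any representatives $u', v'$ of their cliques in $G'$, we obtain $d_G(u,v) \le d_{G'}(u', v') \le D(G')$.

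For the upper bound $D(G') \le 2D(G) + 1$, I would take arbitrary $a, b \in V(G')$ and consider the shortest path $\pi(a) = u_0, u_1, \ldots, u_\ell = \pi(b)$ in $G$, where $\ell \le D(G)$. If $a = b$ or if $a, b$ lie in the same clique the bound is immediate, so assume $\pi(a) \ne \pi(b)$. I would then lift this path to a walk in $G'$: inside each clique on the route, move (in at most one intra-clique step, since cliques are complete) to the endpoint of the matching edge going to the next clique, then traverse that matching edge; repeat. This gives at most one intra-clique step in each of the $\ell + 1$ cliques visited plus $\ell$ inter-clique crossings, for a total of at most $2\ell + 1 \le 2D(G) + 1$ edges. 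The source $s$ is handled as a ``clique'' of size $1$, in which case the corresponding intra-clique step costs $0$ and the count only improves.

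The argument is essentially routine, but the main care-point is bookkeeping in the upper bound: making sure one does not overcount intra-clique steps when $a$ (respectively $b$) already happens to be the endpoint of the first (respectively last) matching edge, and treating $s$ as a degenerate singleton clique so no special case is needed. The lower bound is a clean consequence of the contraction-like property of $\pi$, so no real obstacle is expected there.
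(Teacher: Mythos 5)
Your proof is correct and follows essentially the same route as the paper's (which is only sketched there): the lower bound via the contraction-like projection onto $G$, and the upper bound by lifting a length-$\ell$ path of $G$ to a path of length at most $2\ell+1$ in $G'$ — your direct counting of one intra-clique step per visited clique plus $\ell$ matching edges is just an unrolled version of the paper's induction. No gaps.
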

\begin{proof}[Proof idea]
    The first inequality is straightforward. We prove the second inequality by induction on the length of the path, that is if there is a path in $G$ from $u$ to $v$ of length $l$, then for any nodes $u'$ and $v'$ in $u$-clique and $v$-clique respectively, there is a path between $u'$ and $v'$ in $G'$ of length $2l + 1$.
\end{proof}

\begin{claim}
    \label{clm:small minuct change}
    Consider the original graph $G$ and its corresponding CONGEST graph $G'$. Then\\
    $\min\{minCut(G), \min\limits_{v \in V(G) \setminus \{s\}}  \text{size of the $v$-clique} - 1 \}\leq minCut(G') \leq minCut(G)$.
\end{claim}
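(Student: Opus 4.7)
The plan is to prove the two inequalities separately. The upper bound is essentially a direct translation of a minimum cut of $G$ into a cut of $G'$, while the lower bound requires a short case analysis on whether some clique is ``split'' by a given cut of $G'$.

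For the upper bound $minCut(G') \leq minCut(G)$, I would fix a minimum $s$-to-$v$ cut in $G$ witnessing $minCut(G)$, partitioning $V(G)$ into $A \ni s$ and $B \ni v$. In $G'$ consider the induced partition that places $s$ together with every $u$-clique for $u \in A \setminus \{s\}$ on the $s$-side, and every $u$-clique for $u \in B$ on the other side. Since the $v$-clique is non-empty, this is a valid $s$-to-$v'$ cut in $G'$ for any $v' \in v$-clique. The only edges of $G'$ crossing this partition are the matching edges corresponding to $G$-edges whose endpoints lie on opposite sides, and their total count equals the summed bandwidth across the original cut, namely $minCut(G)$. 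Hence $minCut(G') \leq minCut(G)$.

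For the lower bound, I would fix any cut in $G'$ separating $s$ from some node $w$, and for each $v \in V(G) \setminus \{s\}$ let $A_v$ denote the intersection of the $s$-side with the $v$-clique. Two regimes arise. If some $v$-clique is split, i.e.\ $0 < |A_v| < |v\text{-clique}|$, then the cut contains the full complete bipartite subgraph on $A_v$ and $v\text{-clique} \setminus A_v$, contributing at least $|A_v|\cdot(|v\text{-clique}| - |A_v|) \geq |v\text{-clique}| - 1$ intra-clique edges, which already meets the bound $\min_{u \neq s}(|u\text{-clique}| - 1)$. Otherwise every $u$-clique lies entirely on one side, and the partition descends to a partition of $V(G)$ separating $s$ from the $G$-vertex $w'$ whose clique contains $w$; the crossing edges in $G'$ are exactly the matching edges across $G$-edges with endpoints on opposite sides, whose total count equals the corresponding bandwidth sum in $G$ and is therefore at least $minCut(G)$. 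Taking the minimum over the two regimes yields the claimed lower bound.

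The step I expect to require the most care is simply being explicit about the meaning of $minCut$ (global edge connectivity versus minimum $s$-to-$v$ cut) and using that same notion on both sides, so that the cuts constructed in the upper bound really witness $minCut(G')$ and the cuts considered in the lower bound really witness $minCut(G)$ once they collapse to partitions of $V(G)$. Beyond that, the only subtle point is the non-split case of the lower bound, where I must check that the induced $V(G)$-partition separates $s$ from $w'$; this is immediate because the clique containing $w$ is entirely on the non-$s$ side. The remaining computations are routine bandwidth-counting between $G$ and $G'$.
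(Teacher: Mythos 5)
Your proposal is correct and follows essentially the same route as the paper's proof: the upper bound by lifting a minimum cut of $G$ to $G'$, and the lower bound by the case split on whether some $v$-clique is severed (contributing at least $|v\text{-clique}|-1$ intra-clique edges) or not (in which case the cut descends to a cut of $G$). Your version is just a more explicit write-up, including the bound $|A_v|\cdot(|v\text{-clique}|-|A_v|) \geq |v\text{-clique}|-1$ that the paper leaves implicit.
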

\begin{proof}
    The second inequality is straightforward. For the first inequality, note that each cut of $G'$ either cuts some clique or does not. In case it does not, it corresponds to a cut in $G$ and has the same size. In case it does, it is at least the size of the induced cut for that clique, which is at least $\min\limits_{v \in V(G) \setminus \{s\}}  \text{size of the $v$-clique} - 1$.
\end{proof}

\begin{claim}
    \label{clm:small round complexity change}
    Consider the original graph $G$ and its corresponding CONGEST graph $G'$. Together with a set $M$ of messages, they define a multi-message broadcast problem in generalized CONGEST and CONGEST, respectively. Let $\mathrm{OPT}(G)$ and $\mathrm{OPT}(G')$ denote the optimal round complexities for $G$ and $G'$ respectively. Then $\mathrm{OPT}(G) \leq \mathrm{OPT}(G')$.
\end{claim}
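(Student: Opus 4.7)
The plan is to prove $OPT(G) \le OPT(G')$ by simulation: given any algorithm $\mathcal{A}'$ that solves multi-message broadcast on $G'$ in $OPT(G')$ rounds, I will construct an algorithm $\mathcal{A}$ for $G$ that completes the broadcast in at most $OPT(G')$ rounds. The intuition is that $G$ is ``at least as powerful'' as $G'$, since $G'$ was obtained from $G$ by splitting each non-source node into a clique and replacing each bandwidth-$b$ edge by $b$ parallel unit-bandwidth edges; $G$ can emulate this without ever losing a round.

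First I would define the simulation. In $\mathcal{A}$, the source node $s$ of $G$ simulates the source of $G'$, and each non-source node $v \in V(G) \setminus \{s\}$ internally simulates the entire $v$-clique of $G'$ (remembering the state of every simulated vertex locally). All simulated-clique-internal communication of $\mathcal{A}'$ is then handled for free inside $v$, with no network round used. For cross-clique communication: if $\mathcal{A}'$ transmits messages $m_1, \dots, m_{b'}$ (with $b' \le b$) along the $b$ matching edges in $G'$ between the $v$-clique and the $u$-clique in a single round, then in that same round $\mathcal{A}$ sends the tuple $(m_1, \dots, m_{b'})$ along the original edge $(v, u)$ in $G$, which has bandwidth exactly $b$ and so can carry this payload in one round. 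The edges incident to $s$ are handled analogously, since by construction the edge $(s, u)$ of bandwidth $b$ in $G$ was replaced by $b$ unit-bandwidth edges in $G'$ from $s$ to the $u$-clique.

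Next I would formalize the invariant: after round $t$ of $\mathcal{A}$, for every simulated vertex $w$ of the $v$-clique, the real node $v$ holds the exact state that $w$ would hold after round $t$ of $\mathcal{A}'$. This follows by a straightforward induction on $t$, since intra-clique messages are delivered locally within $v$ and inter-clique messages are delivered by the bundled transmission described above. Hence, once $\mathcal{A}'$ completes after $OPT(G')$ rounds, every simulated vertex of $G'$ possesses all of $M$; in particular, for every $v \in V(G)$, at least one simulated vertex in the $v$-clique does, and therefore $v$ itself knows $M$. This gives $OPT(G) \le OPT(G')$.

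The main ``obstacle'' is really just a bookkeeping point rather than a mathematical difficulty: I need to be explicit that the correspondence between the $b$ matching edges of $G'$ and the single bandwidth-$b$ edge of $G$ is fixed once and for all by the construction of the corresponding CONGEST graph, so that $v$ knows which simulated-vertex-to-simulated-vertex message to pack into which slot of the bundled transmission. With that correspondence fixed, the simulation is deterministic and round-preserving, and the randomness of $\mathcal{A}'$ (if any) is simulated internally by $v$. Nothing else is needed, and the proof concludes.
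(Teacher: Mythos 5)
Your proposal is correct and follows essentially the same simulation argument as the paper: take an optimal execution on $G'$, let each node $v$ of $G$ emulate its $v$-clique (intra-clique traffic handled locally, inter-clique traffic bundled onto the single bandwidth-$b$ edge of $G$), and maintain the invariant that $v$ knows everything its clique knows. The only cosmetic difference is that you carry a slightly stronger invariant (full simulated state rather than just the union of known messages), which changes nothing essential.
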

\begin{proof}
    Consider an execution $E'$ for $G'$ which achieves $\mathrm{OPT}$. We claim that we can build an execution $E$ for $G$, such that for every round $r$ of $E'$ and for every $v \in V(G)$, after round $r$ in $E$, $v$ knows all the messages that the nodes of $v$-clique know after round $r$ in $E'$. To do so, consider a round $r$ and some $v \in V(G)$. Let us say that nodes in $v$-clique in $E'$ in round $r$ receive messages $M_1$ from the $u_1$-clique, messages $M_2$ from the $u_2$-clique, and so forth for all neighboring cliques. Then, in $E$ $u_i$ sends $M_i$ to $v$ satisfying the invariant. Note that $u_i$ can do this in terms of the bandwidth by construction of the corresponding CONGEST graph, and in terms of knowing $M_i$ by the invariant.  
\end{proof}

\begin{figure}[ht]
\centering
\begin{minipage}[t]{0.68\linewidth}
  \vspace{0pt}
  We now give an example of an instance of a problem where $D(G) = O(1)$ as well as $\frac{k}{minCut(G)} = O(1)$, but the optimal round complexity is $\Omega({\sqrt{k}})$. The graph we consider is the corresponding CONGEST graph $G'$ to the graph $G$ depicted in Figure \ref{fig:no simple lower bounds appendix}. 

  First, note that $D(G) = 2$, hence by Claim \ref{clm:small diameter change}, $D(G') = O(1)$. Second, notice that $minCut(G) = k$ and the minimal maximal bandwidth of an edge adjacent to some node in $V(G) \setminus \{s\}$ is equal to $k$, therefore, by Claim \ref{clm:small minuct change}, $k - 1 \leq minCut(G') \leq k$. Finally, by Claim \ref{clm:small round complexity change}, $\mathrm{OPT}(G') \geq \mathrm{OPT}(G)$, where $\mathrm{OPT}$ is the optimal round complexity. Therefore, it is sufficient to show that $\mathrm{OPT}(G) = \Omega(\sqrt{k})$. 
\end{minipage}\hfill
\begin{minipage}[t]{0.28\linewidth}
  \vspace{0pt}
  \centering
  \includegraphics[width=\linewidth]{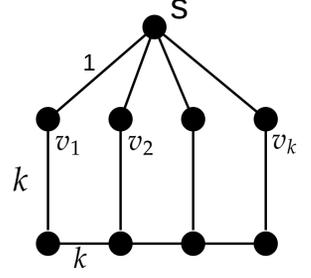}
  \caption{An example graph $G$ where diameter and minimum cut are not telling. Here, edge labels denote bandwidth.}
  \label{fig:no simple lower bounds appendix}
\end{minipage}
\end{figure}

We claim that $\Omega(\sqrt{k})$ rounds are needed for $v_1$ only to get to know $M$ (become saturated). For the sake of contradiction, assume that we can saturate $v_1$ in $\leq \sqrt{k} - 1$ rounds. That means that it can be saturated without using the edges $(s, v_{\sqrt{k} + 1})$ and $(v_{\sqrt{k} + 1}, v_{\sqrt{k} + 2})$. But if we remove those edges, $minCut(s, v_1) \leq \sqrt{k}$, implying that the number of rounds needed is at least $\frac{k}{\sqrt{k}} = \sqrt{k}$, a contradiction.

\section{NP-hardness}
\label{sec:appendix:np-hardness}
To show the NP-hardness of a multi-message broadcast problem, we will also use a generalization of CONGEST that allows for arbitrary edge bandwidth, though this time the construction is different. In this section, we will consider a specific layered graph with layers induced by the distance from $s$. In that graph, all edges connect nodes of consecutive layers. This graph has arbitrarily large bandwidths assigned to its edges, so we transform it into a graph with unit bandwidths by doing the following. For each node $v$ on layer $0 < l < \text{max layer}$, we create a group of $n$ nodes called $v_{out}$, where $n$ denotes the number of messages (we change the notation due to reduction). Then, for every edge $(u, v)$ of the original graph, where $u$ belongs to the previous layer ($l - 1$), if that edge has bandwidth $b \leq n$, we create a group of $b$ nodes called $v_{u-in}$ and we connect arbitrary $b$ nodes of $u_{out}$ 1 to 1 to node of $v_{u-in}$. For every $u$, we connect every node of $v_{u-in}$ to every node of $v_{out}$. For node $s$, we replace it with a new sink $s'$ and create a $K_{n, n}$ with its first half called $s_{in}$ and its second half called $s_{out}$. We then connect $s'$ to all the nodes in $s_{in}$, and we connect all the nodes of $s_{out}$ to the $in$-s of the nodes $s$ is connected to in the original graph in a way described above. For all the nodes of the last layer, we keep them a single node and draw all the incoming edges to this node. We call the resulting graph of this transformation the \emph{transformed} graph. Please see Figure \ref{fig:transform for np} for an example. 
\begin{figure}[H]
    \centering
    \includegraphics[width=0.5\linewidth]{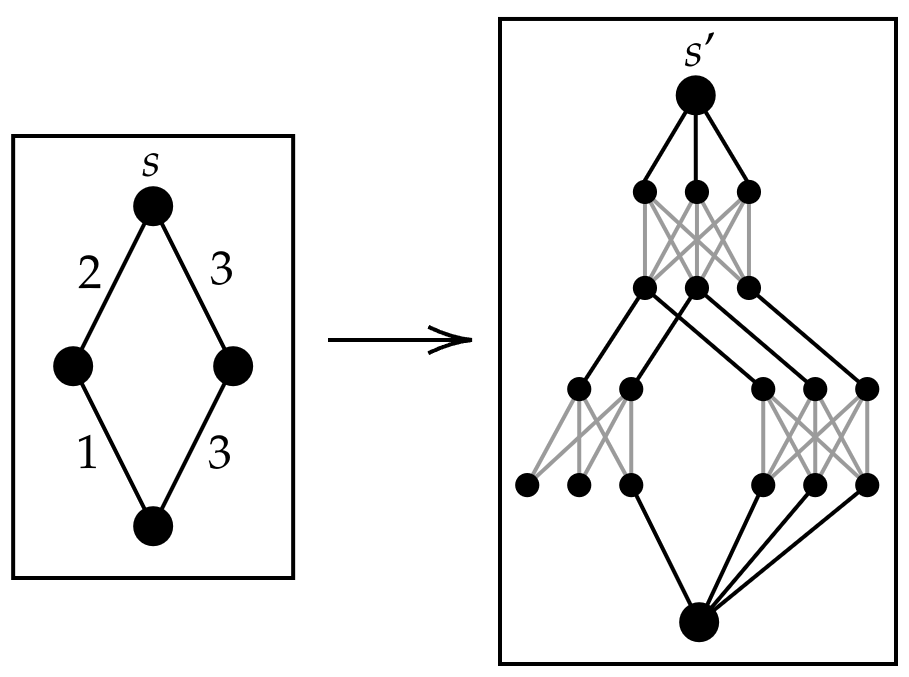}
    \caption{Example of transforming a layered graph with arbitrary bandwidths into a graph suitable for CONGEST. Here the number of messages $n$ is $3$.}
    \label{fig:transform for np}
\end{figure}

\begin{claim}
    \label{clm:np equivalence}
    Consider a layered graph $G$ with arbitrary bandwidths and $G'$ being its transformed version. Denote $l$ to be the depth of $G$ and let $T$ be the set of nodes in $G$ in layer $l$. Then it is possible to saturate all nodes in $T$ in $l$ rounds in $G$ if and only if it is possible to saturate all nodes in $T$ in $2l + 1$ rounds in $G'$.
\end{claim}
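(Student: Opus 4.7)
My plan is to establish each direction of the equivalence by an explicit simulation argument.

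For the forward direction ($G$ saturating in $l$ rounds implies $G'$ saturating in $2l+1$ rounds), I would construct an execution $E'$ in $G'$ from an execution $E$ in $G$. Rounds 1 and 2 of $E'$ are reserved for initialization: in round 1, $s'$ sends each message $m_i$ across its dedicated edge to $s_{in}[i]$; in round 2, the $K_{n,n}$ between $s_{in}$ and $s_{out}$ lets me arrange that every node of $s_{out}$ knows all of $M$. For $r = 1, \ldots, l-1$, rounds $2r+1$ and $2r+2$ simulate round $r$ of $E$: in round $2r+1$, whenever $u$ sends a set $S^r_{u,v}$ of at most $b_{uv}$ messages to $v$ in round $r$ of $E$, the $b_{uv}$ matched nodes of $u_{out}$ push these messages across the matching to $v_{u-in}$, each matched node transmitting one distinct message; in round $2r+2$, each $v_{u-in}[k]$ broadcasts its message along its $n$ complete-bipartite edges to $v_{out}$, so every $v_{out}$ node learns all messages that the $v_{u-in}$ groups received. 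The final round $2l+1$ handles $T$: since each $t \in T$ is a single node with direct matching edges from $u_{out}$ groups, the matched $u_{out}$ nodes simply transmit the messages $u$ would send in round $l$ of $E$. The invariant I carry through the induction is that after round $2r+2$ of $E'$, every node of $v_{out}$ (for every $v$ in layer $\le r$) knows the entire set $K^E_v(r)$ of messages $v$ knows at time $r$ in $E$; the key enabling fact is that the complete bipartite $v_{u-in} \to v_{out}$ step collectively informs every $v_{out}$ node, so by the next compound step, any matched node of $u_{out}$ can transmit whichever message $E$ demands regardless of its index.

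For the reverse direction ($G'$ saturating in $2l+1$ rounds implies $G$ saturating in $l$ rounds), I would use a flow-based projection. Given $E'$ in $G'$, define $F_{u,v}(t)$ to be the set of messages that have crossed the matching $u_{out} \to v_{u-in}$ (or $u_{out} \to t$ for $t \in T$) by round $t$ of $E'$. The matching capacity yields $|F_{u,v}(t+1) \setminus F_{u,v}(t)| \le b_{uv}$ per round. The crucial observation is that every $s' \to t$ path in $G'$ has length exactly $2l+1$ and every cut separating $s'$ from $t$ must cross a matching edge, whose capacity equals the corresponding $G$-edge capacity; hence the max flow in the time-expanded $G'$ over $2l+1$ rounds is bounded above by the max flow in the time-expanded $G$ over $l$ rounds (the 2-for-1 round compression is compensated by the shorter path length in $G$). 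I then define $E$ concretely by induction on $r$: in round $r$ of $E$, for each edge $(u,v)$, $u$ sends to $v$ the (at most $b_{uv}$) newly relevant messages obtained from the flow projection, and inductively verify that $K^E_v(r) \supseteq K^{E'}_v(2r+2)$ for interior $v$ (and $K^E_t(l) \supseteq K^{E'}_t(2l+1) = M$ for $t \in T$).

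The main obstacle will be the rigid fixed matching between $u_{out}$ and $v_{u-in}$. In the forward direction, the matching selects an arbitrary $b_{uv}$ nodes of $u_{out}$, so the simulation must route $E$'s chosen messages through whichever indices are matched; this is handled by the strong invariant that every $u_{out}$ node collectively knows $K^E_u$ via the broadcasts from incoming $u_{w-in}$ groups. In the reverse direction, the obstacle is the apparent excess capacity: the matching can be used in every round of $E'$, seemingly giving $(2l+1) \cdot b_{uv}$ total flow across $(u,v)$ versus only $l \cdot b_{uv}$ in $G$. The resolution is that only $l$ of the matching rounds can carry flow that actually reaches $t \in T$ within the $2l+1$-round horizon (due to the path-length constraint after crossing), so the effective useful flow across any matching matches that of $G$. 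Correctly pinning down this time-window bookkeeping, especially for the boundary rounds 1--2 (initialization) and round $2l+1$ (direct $u_{out} \to t$), is the technical heart of the proof.
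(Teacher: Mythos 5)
Your forward direction is essentially the paper's argument: the same two-rounds-per-simulated-round schedule, the same initialization through $s_{in}$/$s_{out}$, and the same invariant that after round $2r+2$ every node of $v_{out}$ knows what $v$ knows after round $r$; the observation that the complete bipartite $v_{u\text{-}in}\to v_{out}$ step lets any matched $u_{out}$ node later transmit any required message is exactly what makes the fixed matching harmless, and you state it correctly. For the reverse direction the paper only says the simulation ``proceeds analogously,'' and you are right to flag that the naive invariant is in tension with the matching being usable in every round of $E'$. However, your quantitative resolution is off: it is not that ``only $l$ of the matching rounds can carry flow,'' which would still allow $v_{out}$ to gain up to $2b_{uv}$ new messages from $u$ inside a single two-round window and break the round-by-round invariant. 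The correct (and stronger) fact is that $G'$ is layered with every $s'$--$t$ path of length exactly $2l+1$, so a message crossing an edge at depth $d$ at round $\rho$ reaches $T$ by round $2l+1$ only if $\rho\le d$, while it cannot arrive at depth $d$ before round $d$; hence each edge, and therefore each matching, carries useful flow in exactly one round, namely round equal to its depth. With that, each matching delivers at most $b_{uv}$ messages relevant to saturating $T$, all in the window simulating round $r$ of $G$, and the inclusion $K^E_v(r)\supseteq K^{E'}_{v_{out}}(2r+2)$ (restricted to useful messages) closes. The same one-useful-round-per-edge argument applies in $G$ itself, so the capacities match exactly rather than up to a factor of $l$. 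Your flow framing is a legitimate alternative to the paper's simulation, but as written the bookkeeping step you call the ``technical heart'' needs this correction to go through.
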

\begin{proof}
    Consider an execution $E$ for $G$ in which all nodes in $T$ are saturated in $l$ rounds. We build an execution $E'$ for $G'$ that satisfies the following invariant: for $0 \leq r < l$, after $2r + 2$ rounds of $E'$, for every node $v \in V(G)$ such that $v$ is in layer $t \leq r$, nodes in $v_{out}$ know the same set of messages in $E'$ as $v$ knows in $E$ after round $r$. For $r = 0$, we make $s'$ send all messages to $s_{in}$ (a distinct message to each node) and $s_{in}$ to relay those messages to $s_{out}$. Then, if in round $r > 0$ in $E$ $u$ sends $v$ $b$ messages, $u_{out}$ send $v_{u-in}$ those $b$ messages and then $v_{u-in}$ relay those to $v_{out}$. In the final $l$-th round of $E$, nodes of $G$ send messages to $t_i \in T$. This can be simulated in $E'$ within one round, making it $2(l - 1) + 2 = 2l$ rounds to reach $v_{out}$ for all $v$-s in layer $l - 1$ and $1$ more round to saturate $T$.

    The proof of the other direction proceeds analogously, maintaining the invariant that every node $v \in V(G)$ in $E$ after $r$ rounds knows all the messages that $v_{out}$ knows in $E'$ after $2r + 2$ rounds. 
\end{proof}

\begin{theorem}
    The multi-message broadcast problem is NP-hard in a centralized setting.
\end{theorem}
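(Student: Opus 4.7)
The plan is to reduce from the Set Splitting problem, which is NP-complete. Given an instance $(S, F)$ with $|S| = k$ and $F = \{F_1, \ldots, F_m\}$, I will construct a layered graph $\tilde{G}$ with arbitrary edge bandwidths and message set $M$ in bijection with $S$, together with a round target $T$, so that $\tilde{G}$ admits a multi-message broadcast schedule of length $T$ if and only if $(S, F)$ has a valid split. Transferring the hardness to CONGEST is then immediate via Claim~\ref{clm:np equivalence}: the corresponding CONGEST graph $\tilde{G}'$ admits broadcast in $2T+1$ rounds iff $\tilde{G}$ admits it in $T$ rounds, which in turn iff $(S, F)$ splits, yielding NP-hardness in the CONGEST model.

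The construction places $s$ on layer $0$ and two dedicated children $c_1, c_2$ on layer $1$, with the bandwidths of $(s, c_1)$ and $(s, c_2)$ tuned so that meeting the round budget forces $s$ to partition $M$ across them, yielding $M(c_1) \sqcup M(c_2)$ that encodes a candidate split $S = S_1 \sqcup S_2$ (this is the single nontrivial choice the algorithm is allowed to make, as mentioned in the outline above). For each $F_i \in F$, I attach a verifier gadget $V_i$ fed by $c_1$, $c_2$, and by an auxiliary express node that supplies the $k - |F_i|$ messages lying outside $F_i$. The crux of the gadget is that each of $c_1, c_2$ has only $|F_i| - 1$ outgoing bandwidth into $V_i$, so neither child alone can deliver all of $F_i$ in one round, and $V_i$ can only saturate within the budget if $F_i$ is actually split between $M(c_1)$ and $M(c_2)$.

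Correctness would be verified in both directions. In the forward direction, a valid split tells $s$ how to partition $M$ in round $1$; each $V_i$ then receives its $F_i$-portion simultaneously from $c_1$ and $c_2$ in round $2$, while the express node supplies the remaining $k - |F_i|$ messages, completing the broadcast in $T$ rounds. For the converse, if the broadcast finishes in $T$ rounds, a counting argument on the incoming edges of each $V_i$ forces both $c_1$ and $c_2$ to contribute at least one message of $F_i$, so the induced partition splits every $F_i$. The non-gadget nodes ($c_1, c_2$, the express helper) are arranged to saturate via direct high-bandwidth edges from $s$ independently of the partition chosen, so that the network as a whole completes broadcast in $T$ rounds exactly when every verifier does.

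The hard part will be designing the verifier so that the bottleneck truly forces splitting. The subtlety is that edge bandwidths restrict only the \emph{quantity} of messages sent, not their \emph{type}: a naive express edge intended to carry only non-$F_i$ content can be misused by the algorithm to smuggle a single $F_i$ message, while a non-$F_i$ message is re-routed through one of the children in its place, thereby defeating the intended bottleneck. I anticipate closing this loophole by restricting the express supply further upstream---for instance by feeding it from a node whose bandwidth from $s$ is exactly $k - |F_i|$ and whose role in satisfying other tight deadlines in $\tilde{G}$ makes any deviation from carrying $M \setminus F_i$ strictly wasteful---or alternatively by introducing multiple verifier copies per $F_i$ and invoking a pigeonhole-style argument across them. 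Tightening this accounting into a single unambiguous construction, while keeping the layered structure compatible with Claim~\ref{clm:np equivalence}, is the principal technical effort.
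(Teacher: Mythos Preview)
Your high-level plan matches the paper: reduce from Set Splitting, encode a candidate split by what $s$ sends to two dedicated children in the first round, attach a per-$F_i$ verifier that fails unless $F_i$ is split, and transfer to CONGEST via Claim~\ref{clm:np equivalence}. The divergence is in the verifier gadget, and there the loophole you flag is real and your suggested patches do not close it.

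Your gadget caps each edge $c_j\to V_i$ at bandwidth $|F_i|-1$ and relies on an express feed that carries exactly $M\setminus F_i$. But bandwidths constrain only \emph{how many} messages cross an edge, never \emph{which} ones, so an upstream node with incoming bandwidth $k-|F_i|$ from $s$ can still hold any $k-|F_i|$ messages, including elements of $F_i$; forcing it to hold $M\setminus F_i$ via ``tight deadlines elsewhere'' just recurses the same typing problem, and replicating the verifier does not help because the same substitution works independently in every copy. The paper sidesteps this with a different gadget that never types the express path. It routes the elements of $F_i$ through single-message carriers $v_1,\dots,v_n$ (each with a bandwidth-$1$ edge from $s$) into a node holding $\{\phi(v_l):l\in F[i]\}$, takes the union with $S_j$ at an intermediate node $u_{i,j}$, and feeds a sink $t_{i,j}$ from $u_{i,j}$ with full bandwidth together with a direct path from $s$ of bandwidth exactly $n-n_j-1$. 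Saturating $t_{i,j}$ on time is then equivalent to the \emph{size} inequality $|\{\phi(v_l):l\in F[i]\}\cup S_j|>n_j$, i.e.\ $F_i\not\subset S_j$ up to the relabelling $\phi$; a separate argument shows $\phi$ can be made bijective without breaking any such inequality. This union-size test is content-agnostic on the express side and is the idea your plan is missing.

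A smaller gap: Claim~\ref{clm:np equivalence} only equates saturation of the \emph{last-layer} nodes in the two graphs. The paper therefore adds, \emph{after} the transformation, auxiliary length-$9$ paths from the source to every non-sink node of the transformed graph so that the global broadcast deadline coincides with the sink deadline. Your plan to saturate non-gadget nodes via direct high-bandwidth edges inside $\tilde G$ does not automatically cover the many new intermediate nodes created by the transformation; that post-processing step is still required.
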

\begin{proof}
    We reduce the Set splitting problem: given a family $F$ of subsets of a finite set $S$, decide whether there exists a partition of $S$ into two subsets $S_1$, $S_2$ such that all elements of $F$ are split by this partition, i.e., none of the elements of $F$ is completely in $S_1$ or $S_2$.

    Denote $n := |S|$, $m := |F|$. We start creating a reduction graph by creating a source node $s$ and assigning it a set of messages corresponding to elements in $S$: $\{m_1, \ldots, m_n\}$. We also create $n$ nodes $v_1, \ldots, v_n$ with edges $(s, v_i)$ of bandwidth $1$. Intuitively, we want every $v_i$ to hold $m_i$ after the first round. 

    We create nodes that correspond to the elements of $F$: $F_1, \ldots F_m$ and we draw an edge $(v_i, F_j)$ of bandwidth $1$ iff $S[i] \in F[j]$. This way, after round two, $F_i$ will possess messages that correspond to the elements of $F[i]$. 

    With a slight abuse of notation, we introduce two other nodes, namely $S_1$ and $S_2$, that intuitively correspond to a partition of $S$. We focus on solving the set partition problem for the given size of the parts, i.e., $|S_1| = n_1$ and $|S_2| = n_2$ with $n_1 + n_2 = n$. Obviously, this version is also NP-complete. We draw an edge $(s, S_1)$ of bandwidth $n_1$ and $(s, S_2)$ of bandwidth $n_2$. We want $S_1$ and $S_2$ to be in layer $2$, so we introduce intermediate nodes on those edges whose role will simply be to relay messages.  

    Now, we introduce nodes $u_{i, 1}, u_{i, 2}$ for $i \in [m]$. We draw following edges: $(F_i, u_{i,1})$ with bandwidth $|F_i|$, $(S_1, u_{i, 1})$ with bandwidth $n_1$. Similarly, for $u_{i, 2}$ we draw $(F_i, u_{i, 2})$ with bandwidth $|F_i|$ and $(S_2, u_{i, 2})$ with bandwidth $n_2$. Intuitively, $u_{i, 1}$ serves the meaning of the union of $F[i]$ and $S_1$. 

    We introduce nodes $t_{i, 1}$ and $t_{i, 2}$ for $i \in [m]$. For $i \in [m]$ we draw an edge $(u_{i, 1}, t_{i,1})$  of bandwidth $n$ and, and this is the crux of the reduction, an edge $(s, t_{i, 1})$ of bandwidth $n - n_1 \mathbf{- 1}$ and of length $4$ (with $3$ intermediate nodes). The idea here is that $t_{i, 1}$ can be saturated after round $4$ if and only if it receives more than $n_1$ messages from $u_{i, 1},$ implying $F[i] \not\subset S_1$. Similarly, we do for $S_2$. See the resulting construction in Figure \ref{fig:np construction}.

    \begin{figure}
        \centering
        \includegraphics[width=0.9\linewidth]{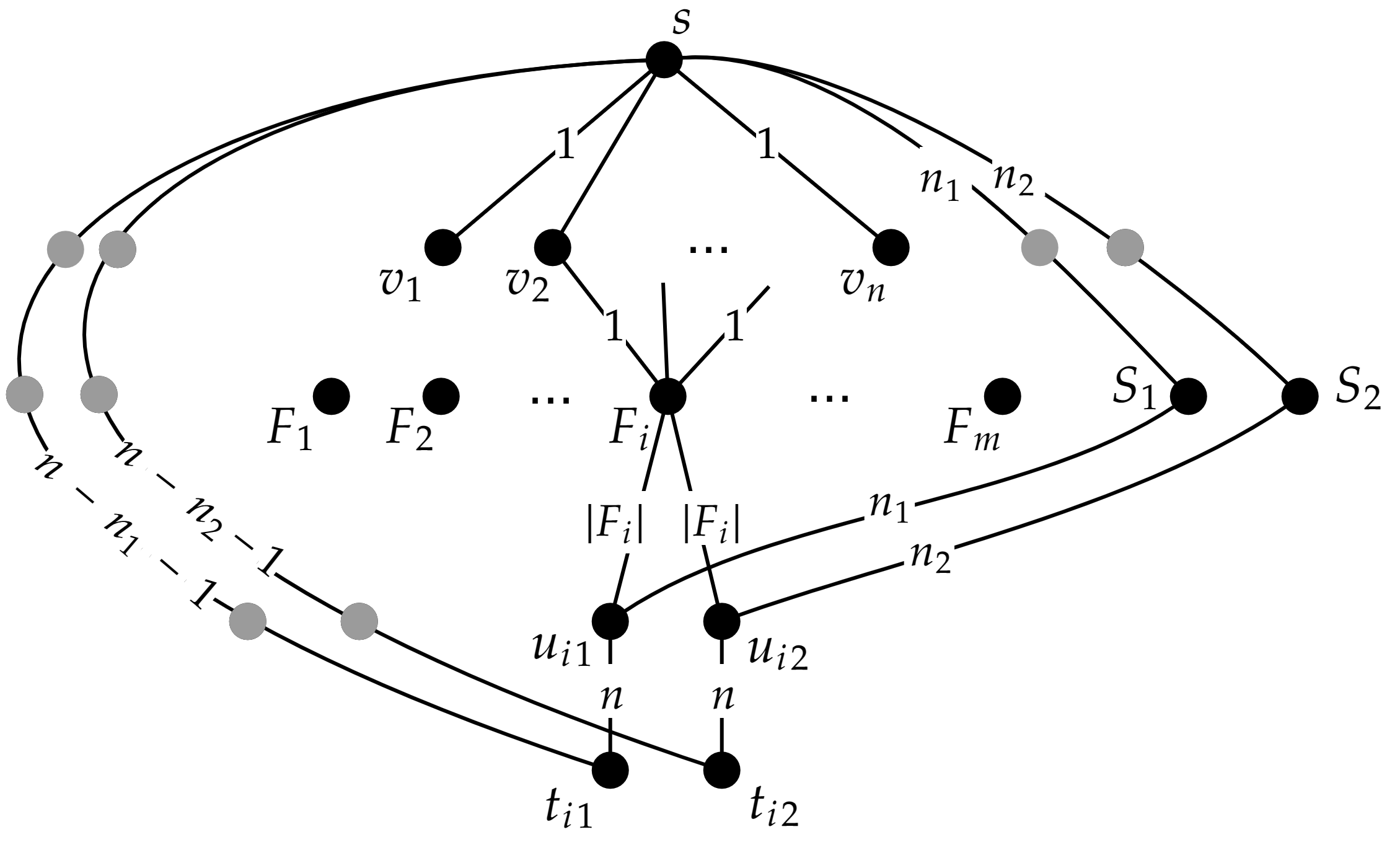}
        \caption{Graph $G$ with arbitrary bandwidths for which it is NP-hard to optimally solve multi-message broadcast. Some nodes are depicted in gray since they serve no other purpose but to layer the graph, and in reasonable executions, should only relay messages.}
        \label{fig:np construction}
    \end{figure}

    If we now consider a transformed graph $G'$, we want to focus on saturating nodes in $T = \{t_{11}, t_{12}, \ldots, t_{m1}, t_{m2}\}$, though in multi-message broadcast problem the goal is to saturate \emph{all} nodes. To account for that, for each node $v \in V(G') \setminus T$, we will make sure that it can be saturated in $9$ rounds. We do that by introducing a path of length $9$ and bandwidth $n$ from $s$ to $v$. In particular, each such path has $6$ intermediate layers of $n$ nodes each. Each node in the first layer is connected to each node in $s_{out}$. Each node in layer $1 < l \leq 6$ is connected to each node in layer $l - 1$, and $v$ is connected to each node in layer $6$. This way, we obtain the graph $G''$. Note that introducing these additional paths does not help saturate $T$ in fewer than $9$ rounds, that is, $T$ can be saturated in $9$ rounds in $G''$ iff it can be saturated in $9$ rounds in $G'$.
    
    These observations, combined with Claim \ref{clm:np equivalence} allow us to establish the following sequence of equivalent statements ($\Leftrightarrow$ denotes equivalence):\\
    (I) The set splitting for $S$ and $F_1, \ldots, F_m$ is possible $\Leftrightarrow$\\
    (II) Saturating $T$ in $G$ in $4$ rounds is possible $\Leftrightarrow$\\
    (III) Saturating $T$ in $G'$ in $9$ rounds is possible $\Leftrightarrow$\\
    (IV) Solving the multi-message broadcast in $9$ rounds in $G''$ is possible.

    The equivalence of (II) and (III) is Claim \ref{clm:np equivalence}. The equivalence of (III) and (IV) is discussed above. Hence, leaving the details of those unspecified, we focus on the informative part - the equivalence of (I) and (II). 

    First, assume it is possible to split $S$, and this splitting is $S_1$ and $S_2$. Then we claim it is possible to saturate $T$ in $G$ in $4$ rounds. To do so, let $s$ send $\{m_i \mid i \in S_1\}$ to $S_1$ and $\{m_i \mid i \in S_2\}$ to $S_2$. Also, let it send $m_i$ to $v_i$ and to $t_{ij}$, $i \in [m]$, $j \in \{1, 2\}$, $s$ sends $S \setminus (F_i \cup S_j)$. After that, nodes only relay the messages they have to further layers. Now we claim that after round $4$, all nodes in $T$ are saturated. Indeed, for instance, $t_{i1}$ will receive $F_i \cup S_1 \cup (S \setminus (F_i \cup S_1)) = S$, the main point being that since $F_i \subsetneq S_1$, $|F_i \cup S_1| > |S_1| = n_1$, therefore $|S \setminus (F_i \cup S_1)| \leq n -  n_1 - 1$ and $s$ can send it whole. 

    Now, assume we can saturate $T$ in $G$ in $4$ rounds. This implies that every $u_{i1}$ in round $3$ holds more than $n_1$ messages, implying that messages held by $F_i$ are not a strict subset of messages held by $S_1$ in round $2$. Analogously, it holds for $F_i$ and $S_2$. This means, there is (possibly non-injective) mapping $\phi$ of $\{v_1, \ldots, v_n\}$ into $S$ so that $\forall i \in [m], j \in \{1, 2\}$ it holds that $(\ast)$ $\bigcup\limits_{l \in F[i]}\phi(v_l) \subsetneq S_j$. Note that by making $\phi$ injective (and thus bijective) by iteratively taking a colliding pair $x, y$ ($\phi(x) = \phi(y)$) and assigning $y$ to the so far uncovered element, we cannot break $\ast$. Therefore, we can assume that $\phi$ (i.e., distribution of messages across $v_i$) is bijective, which gives a solution to the splitting problem up to permuting the elements. 

\end{proof}

\section{Technical proofs}
\label{sec:appendix:proofs}
\begin{proof}[Proof of Lemma \ref{lem:eigenvalue shift}]
    Let $A' = A + DE$ and let $D'$ be a diagonal matrix such that $D'_{ii} = \sum\limits_{j \in [n]}A'_{ij}$. Note that $D' = D + DE$ and hence $(D')^{-1/2} = (D)^{-1/2}(I + E)^{-1/2}$. Entries of the $(I + E)^{-1/2}$ are of the form $\frac{1}{\sqrt{1 + E_{ii}}} \geq \frac{1}{\sqrt{1 + \varepsilon}} \geq \sqrt{1 - \varepsilon} \geq 1 - \varepsilon$. Therefore, we can denote $(I + E)^{-1/2}$ with $I - E'$ where $E'$ is a diagonal matrix with entries $0 \leq E'_{ii} \leq \varepsilon$. Now
        \begin{align*}
            \overline{A'} =& (D')^{-1/2}A'(D')^{-1/2}\\
                          =& (D^{-1/2} - D^{-1/2}E')(A + DE)(D^{-1/2} - D^{-1/2}E')\\
                          \overset{\ast}{=}& D^{-1/2}AD^{-1/2} - D^{-1/2}AD^{-1/2}E' + E - EE' - E'D^{-1/2}AD^{-1/2} + \\& E'D^{-1/2}AD^{-1/2}E' - E'E + E'EE'\\
                          =& \overline{A} - \overline{A}E' - E'\overline{A} + E'\overline{A}E' + E - 2EE' + E'EE'
        \end{align*}
        where to obtain $\ast$ we used the fact that diagonal matrices commute. 

        From Weyl's theorem, we conclude that 
        \begin{align*}
            \lambda_2(\overline{A'}) - \lambda_2(\overline{A}) &\leq ||- \overline{A}E' - E'\overline{A} + E'\overline{A}E' + E - 2EE' + E'EE'||_2\\
            \leq& ||\overline{A}E'||_2 + || E'\overline{A}||_2 + ||E'\overline{A}E' ||_2 + 2||EE'||_2 + ||E'EE' ||_2 \\
            \leq& ||\overline{A}||_2||E'||_2 + || E'||_2||\overline{A}||_2 + ||E'||_2||\overline{A}||_2||E' ||_2 +\\
            &2||E||_2||E'||_2 + ||E'||_2||E||_2||E' ||_2 \\
        \end{align*}

        Now recall that the spectral norm for a real-valued symmetric matrix is the biggest absolute value of its eigenvalues, hence $||\overline{A}||_2 = 1$ and $||E||_2 \leq \varepsilon$, $||E'||_2 \leq \varepsilon$. Thus
        \begin{align*}
            \lambda_2(\overline{A'}) - \lambda_2(\overline{A}) \leq \varepsilon + \varepsilon + \varepsilon^2 + 2 \varepsilon^2 + \varepsilon^3 \leq 6\varepsilon
        \end{align*}
\end{proof}

\end{document}